\newcommand{\mst}{\mathit{MST}}
\title{\MakeTextUppercase{A Note on Interference in Random Networks}%
   \thanks{This work was partly funded by NSERC.}}
\author{Luc Devroye\thanks{School of Computer Science, McGill University}
        \ and
        Pat Morin\thanks{School of Computer Science, Carleton University}}
\begin{document}
\maketitle

\begin{abstract}
  The (maximum receiver-centric) interference of a geometric graph
  (von Rickenbach \etal\ (2005)) is studied.  It is shown that,
  with high probability, the following results hold for a set, $V$,
  of $n$ points independently and uniformly distributed in the unit
  $d$-cube, for constant dimension $d$: (1)~there exists a connected
  graph with vertex set $V$ that has interference $O((\log n)^{1/3})$;
  (2)~no connected graph with vertex set $V$ has interference $o((\log
  n)^{1/4})$; and (3)~the minimum spanning tree of $V$ has interference
  $\Theta((\log n)^{1/2})$.
\end{abstract}

\section{Introduction}

Von Rickenbach \etal\ \cite{vR05,rwz09} introduce the notion of (maximum
receiver-centric) interference in wireless networks and argue that
topology-control algorithms for wireless networks should explicitly take
this parameter into account.  Indeed, they show that the minimum spanning
tree, which seems a natural choice to reduce interference, can be very
bad; there exists a set of node locations in which the minimum spanning
tree of the nodes produces a network with maximum interference that is
linear in the number, $n$, of nodes, but a more carefully chosen network
has constant maximum interference, independent of $n$.  These results are,
however, \emph{worst-case}; the set of node locations that achieve this
are very carefully chosen.  In particular, the ratio of the distance
between the furthest and closest pair of nodes is exponential in the
number of nodes.

The current paper continues the study of maximum interference, but in a
model that is closer to a typical case.  In particular, we consider what
happens when the nodes are distributed uniformly, and independently,
in the unit square.  This distribution assumption can be used to
approximately model the unorganized nature of ad-hoc networks and is
commonly used in simulations of such networks \cite{tma09}. Additionally, some
types of sensor networks, especially with military applications, are
specifically designed to be deployed by randomly placing (scattering)
them in the deployment area. This distribution assumption models these
applications very well.

Our results show that the maximum interference, in this case, is very
far from the worst-case.  In particular, for points independently and
uniformly distributed in the unit square, the maximum interference of the
minimum spanning tree grows only like the square root of the logarithm
of the number of nodes.  That is, the maximum interference is \emph{not
even logarithmic} in the number of nodes.  Furthermore, a more carefully
chosen network topology can reduce the maximum interference further still,
to the cubed root of the logarithm of $n$.

\subsection{The Model}

Let $V=\{x_1,\ldots,x_n\}$ be a set of $n$ points in $\R^d$ and let
$G=(V,E)$ be a simple undirected graph with vertex set $V$.  The graph
$G$ defines a set, $B(G)$, of closed balls $B_1,\ldots,B_n$, where $B_i$
has center $x_i$ and radius
\[
   r_i = \max\{\|x_ix_j\| : x_ix_j\in E\} \enspace .
\]
(Here, and throughout, $\|xy\|$ denotes the Euclidean distance between
points $x$ and $y$.)  In words, $B_i$ is just large enough to enclose
all of $x_i$'s neighbours in $G$.  The \emph{(maximum receiver-centric)
interference} at a point, $x$, is the number of these balls that contain
$x$, i.e.,
\[
    I(x,G) = |\{B\in B(G) : x\in B\}| \enspace .
\]
The \emph{(maximum receiver-centric) interference} of $G$ is the maximum
interference at any vertex of $G$, i.e.,
\[
   I(G) = \max\{I(x,G) : x\in V\} \enspace .
\]
\figref{interference} shows an example of a geometric graph $G$ and the
balls $B(G)$.  Each node, $x$, is labelled with $I(x,G)$.

\begin{figure}
  \begin{center}
    \includegraphics{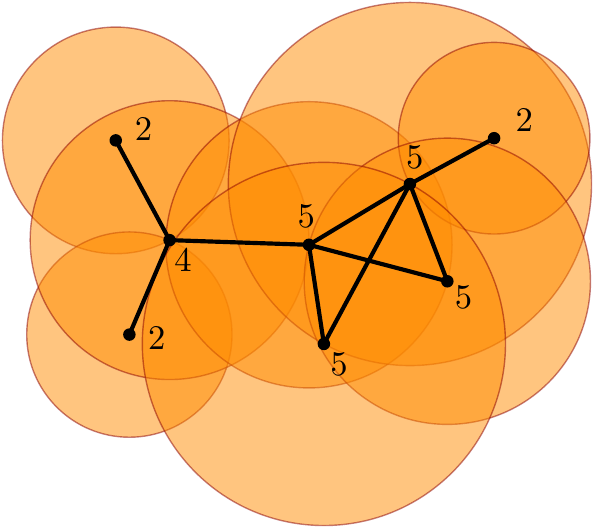}
  \end{center}
  \caption{A geometric graph $G$ with $I(G)=5$.}
  \figlabel{interference}
\end{figure}

One of the goals of network design is to build, given $V$, a connected
graph $G=(V,E)$ such that $I(G)$ is minimized.  Thus, it is natural to consider interference as a property of the given point set, $V$, defined as
\[
  I(V) = \min\{I(G) : \mbox{$G=(V,E)$ is connected}\} \enspace .
\]
A \emph{minimum spanning tree} of $V$ is a connected graph, $\mst(V)$,
of minimum total edge length.  Minimum spanning trees are a natural
choice for low-interference graphs.  The purpose of the current paper
is to prove the following results (here, and throughout, the phrase
\emph{with high probability} means with probability that approaches 1
as $n\rightarrow\infty$):
\begin{thm}\thmlabel{main}
  Let $V$ be a set of $n$ points independently and uniformly distributed
  in $[0,1]^d$.  With high probability,
\begin{enumerate}
  \item  $I(\mst(V))\in O((\log n)^{1/2})$;
  \item  $I(V)\in O((\log n)^{1/3})$, for $d\in\{1,2\}$; and 
  \item $I(V)\in O((\log n)^{1/3}(\log\log n)^{1/2})$, for $d\ge 3$.
\end{enumerate}
\end{thm}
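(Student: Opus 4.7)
My plan is to address the three bounds in order: first a purely probabilistic analysis of the MST, and then explicit constructions whose parameters are tuned differently in low and high dimensions.

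For claim~(1), I would start from the classical high-probability bound that the longest edge of the Euclidean MST on $n$ i.i.d.\ uniform points in $[0,1]^d$ has length at most $c(\log n/n)^{1/d}$, so every ball in $B(\mst(V))$ has radius bounded by this quantity. Fix a vertex $x$ and list the remaining points $x_{(1)},x_{(2)},\ldots$ in order of increasing distance $\rho_1\leq\rho_2\leq\cdots$ from $x$; then $I(x,\mst(V))$ is the number of indices $k$ for which the longest MST edge at $x_{(k)}$ has length at least $\rho_k$. The classical cone property of the MST (an MST edge of length $\geq\ell$ at a vertex leaves an open cone of fixed solid angle, of radius~$\ell$, empty of other points of $V$) gives that this event has probability at most $c_d e^{-c'_d k}$, since the expected number of points in a cone of radius $\rho_k$ around $x_{(k)}$ is $\Theta(k)$. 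Summing over $k$ yields $E[I(x,\mst(V))]=O(1)$ at fixed $x$, and the remaining work is to promote this into a uniform $O((\log n)^{1/2})$ bound over all $n$ choices of $x$.

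For claims (2) and (3), I would exhibit an explicit connected graph $G$ on $V$ by combining a local structure inside each of a family of small cells with a sparse backbone that connects one representative per nonempty cell. Partition $[0,1]^d$ into axis-aligned cells of side $s$; inside each nonempty cell build a spanning tree whose edges at every vertex are as short as possible (for example, a nearest-neighbour path); and join the representatives into a backbone using only edges between nearby cells. Then the interference at a vertex $x$ in cell $C$ splits into a contribution of size $O(ns^d)$ from points of $C$ and its adjacent cells, plus a contribution of size $O(1)$ from the representatives of nearby cells. Choosing $s$ so that $ns^d$ is of order $(\log n)^{1/3}$ balances these contributions and yields the $O((\log n)^{1/3})$ bound in dimensions $d\in\{1,2\}$; in dimension $d\geq 3$ the backbone cost is larger because maximal runs of empty cells are longer, accounting for the extra $(\log\log n)^{1/2}$ factor after re-optimization.

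The main obstacle in claim~(1) is the concentration step. The indicators $\mathbf{1}[r_{(k)}\geq\rho_k]$ are not independent, since they all depend on the local MST structure near $x$, and even a Poisson-style tail bound with mean $O(1)$ only yields a deviation of order $\log n/\log\log n$; obtaining the exponent $1/2$ will require a more delicate argument that treats the close and far neighbours of $x$ on separate scales. The main obstacle for claims (2) and (3) is the design of the backbone in higher dimensions: with high probability, long runs of empty cells force some backbone edges to be much longer than $s$, so the representative graph must be chosen so that each representative is incident to only a few such long edges, keeping the backbone balls small enough to bring the total interference within the stated bound.
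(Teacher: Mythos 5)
Your plan for Part~1 stops exactly where the real difficulty begins, and the route you sketch cannot reach the stated exponent. A per-vertex bound of the form $\Pr\{I(x,\mst(V))\ge m\}\le Ce^{-cm}$ (which is what summing your $c_de^{-c_d'k}$ tails would give, even if the dependence issues were resolved) combined with a union bound over $n$ vertices yields only $O(\log n)$, and you correctly note that Poisson-style concentration around a constant mean gives $O(\log n/\log\log n)$ at best; neither is $O((\log n)^{1/2})$. The correct tail is Gaussian-type, $\Pr\{I(x,\mst(V))\ge m\}\approx e^{-cm^2}$, because each additional unit of interference at $x$ beyond a constant per scale forces empty regions at geometrically growing radii. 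The paper gets this by a multi-scale decomposition rather than a per-neighbour tail sum: it proves that the MST edges with lengths in any fixed dyadic range $(r,2r]$ contribute only $O(1)$ interference at any point (a packing argument using minimality of the MST), then splits the balls covering $x$ into three groups --- radii below roughly $(nt)^{-1/2}$ with $t=2^{(\log n)^{1/2}}$ (bounded by a Chernoff count of points in a small ball, $O((\log n)^{1/2})$), radii above $\sqrt{c\log n/n}$ (impossible w.h.p.\ because a long MST edge forces a large empty ball), and the $O(\log t+\log\log n)=O((\log n)^{1/2})$ dyadic scales in between, each contributing $O(1)$. Without some analogue of the per-scale $O(1)$ lemma and the empty-ball elimination of long edges, your argument has no mechanism to produce the exponent $1/2$.

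For Parts~2 and~3 the parameter choice and the intra-cell construction are both off. If you pick the cell side $s$ so that $ns^d\sim(\log n)^{1/3}$, the \emph{maximum} occupancy over the $\Theta(n/(\log n)^{1/3})$ cells is not $O(ns^d)$ but $\Theta(\log n/\log\log n)$ w.h.p.\ (Poisson maximum with polylogarithmically small mean), so bounding the local contribution by the cell count already exceeds the target. Moreover a nearest-neighbour path inside a cell gives no worst-case guarantee better than the number of points in the cell (von Rickenbach's lower-bound configuration can occur inside a cell). The paper instead takes cells of area $1/(nt)$ with $t=2^{(\log n)^{1/3}}$, so that Chernoff gives maximum occupancy $O((\log n)^{2/3})$, and then applies the worst-case Halld\'orsson--Tokuyama hub construction inside each cell, which yields interference $O(\sqrt{N_i})=O((\log n)^{1/3})$ for any point placement; the $(\log\log n)^{1/2}$ loss for $d\ge 3$ comes precisely from that theorem's $O(\sqrt{N\log N})$ bound in higher dimensions, not from longer runs of empty cells (the backbone analysis is dimension-independent). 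Finally, the backbone contribution is not $O(1)$: connecting one representative per nonempty cell by a minimum spanning tree, the paper bounds its interference by $O(\log t+\log\log n)=O((\log n)^{1/3})$ using the per-dyadic-scale lemma together with the fact that a backbone edge of length $2^r/\sqrt{nt}$ forces an empty ball whose existence has probability $O(n^{-c})$ once $r$ exceeds $O(\log t+\log\log n)$. You would need these two ingredients --- a worst-case $O(\sqrt{N})$ intra-cell construction and a quantitative empty-ball argument for the backbone --- to make your outline work.
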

\begin{thm}\thmlabel{lower-bound}
  Let $V$ be a set of $n$ points independently and uniformly distributed
  in $[0,1]^d$.  With high probability, 
\begin{enumerate}
  \item $I(\mst(V))\in\Omega((\log n)^{1/2})$ 
  \item $I(V)\in \Omega((\log n)^{1/4})$.
\end{enumerate}
\end{thm}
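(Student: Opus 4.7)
The plan is to handle the two parts separately, using the sorted‐gap structure for the MST bound and a graph-independent witness for the $I(V)$ bound.

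For part~(1), I would begin in dimension one. Let $p_1<p_2<\cdots<p_n$ be the sorted sample with gaps $g_i=p_{i+1}-p_i$; the MST is the path, and $r_i=\max(g_{i-1},g_i)$. A direct expansion shows that, for $\ell\ge 2$, the ball $B_{i+\ell}$ covers $p_i$ precisely when
\[
  g_{i+\ell} \;\ge\; g_i+g_{i+1}+\cdots+g_{i+\ell-1} ,
\]
so the right-side interference at $p_i$ equals one plus the number of indices $\ell\ge 2$ satisfying this ``doubling'' inequality. Treating the gaps as essentially i.i.d.\ exponential (via the standard conditioning on $\sum g_i=1$), iterated integration yields the joint probability $2^{-\binom{k}{2}}$ that the inequality holds for all of $\ell=2,\ldots,k$. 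Summed over $\Theta(n)$ starting positions, the expected number of admissible indices is $\Theta\bigl(n\,2^{-k(k-1)/2}\bigr)$, which is $\omega(1)$ for $k\le c\sqrt{\log n}$ with $c$ small; a second-moment estimate, using near-independence of runs separated by more than $2k$ indices, promotes this to the required w.h.p.\ existence statement. For $d\ge 2$, I would restrict attention to the points in a sufficiently thin slab; with high probability, vertical offsets inside the slab are dominated by horizontal gaps, the MST on the slab reduces to the horizontal path, and the 1D argument applies to a sub-sample of $n^{\Omega(1)}$ points.

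For part~(2), the graph is chosen adversarially, so the witness must depend only on $V$. The key observation is that, in any connected $G$, each vertex $q\in V$ has $r_q\ge\rho(q)$, where $\rho(q)=\min_{q'\ne q}\|qq'\|$ is the nearest-neighbour distance: $q$ has at least one incident edge, and every such edge has length at least $\rho(q)$. Hence, if $p\in V$ admits $k$ witnesses $q_1,\ldots,q_k\in V$ with $\rho(q_j)\ge\|q_jp\|$, then $I(p,G)\ge k$ for every connected $G$. I would build such witnesses at geometrically spaced distances $r_j$ from $p$: for each $j$, seek a vertex $q_j$ in the annulus at radius $\approx r_j$ whose ball of radius $r_j$ on the side opposite $p$ contains no other sample point. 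These emptiness events live in essentially disjoint regions and are nearly independent under the Poisson approximation of the sample; balancing the total empty volume $\sum_j r_j^d$ against the $\Theta(n)$ candidate centres $p$, and requiring each annulus to actually contain a candidate $q_j$, yields two competing constraints whose joint satisfaction is feasible for $k=\Omega((\log n)^{1/4})$ scales.

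The main difficulty is the tight coupling in part~(2) between the ``candidate exists'' event (at least one point in the annulus) and the ``empty ball'' event (no points nearby), which pull the geometry in opposite directions as the scale factor changes. The exponent~$1/4$ emerges precisely from optimizing these two constraints across $k$ geometric scales while keeping the sum of emptiness volumes below $\log n/n$ so that a union bound over the $n$ choices of $p$ succeeds.
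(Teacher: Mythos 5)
Your part~(2) argument has a fatal flaw at its very first step. Since $p\in V$, every candidate witness automatically satisfies $\rho(q_j)\le\|q_jp\|$, so your condition $\rho(q_j)\ge\|q_jp\|$ forces equality: every other point of $V$ lies at distance at least $\|q_jp\|$ from $q_j$, i.e.\ $p$ is a nearest neighbour of $q_j$. But then any two witnesses satisfy $\|q_jq_{j'}\|\ge\max(\|q_jp\|,\|q_{j'}p\|)$, so the angle they subtend at $p$ is at least $60^\circ$, and there can be at most $6$ witnesses in the plane (a constant in any fixed $d$), no matter how you space the scales $r_j$. So the ``balancing of two constraints across $k$ geometric scales'' cannot produce $k=\omega(1)$, let alone $\Omega((\log n)^{1/4})$. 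If you weaken the requirement to a one-sided empty ball ``on the side opposite $p$'', the inference $r_{q_j}\ge\|q_jp\|$ breaks: an adversarial connected graph can attach $q_j$ only to a nearby point on the near side, and its ball never reaches $p$. The paper avoids this trap by arguing about the arbitrary connected graph $G$ directly on a \emph{Zeno configuration} (a chain of $k$ points at distances $u3^i$ from a center $x_0$, alone inside a disk of area $\approx 1/n$): either at least $\sqrt{k-1}$ chain vertices have an edge going outward or out of the disk, and each such ball covers $x_0$; or some such vertex has degree at least $\sqrt{k-1}-1$ among the remaining chain vertices and itself receives interference $\ge\sqrt{k-1}$. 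This dichotomy, applied with $k=\Theta((\log n)^{1/2})$, is precisely where the exponent $1/4$ comes from; it is not a union-bound/volume optimization.

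Part~(1) is sound in dimension one (your doubling-gap computation is essentially the argument of Kranakis et al.), but the extension to $d\ge2$ via a thin slab does not work, because the theorem is about $\mst(V)$, not about the MST of the slab points. If the slab is thin enough to behave one-dimensionally yet contains $n^{\Omega(1)}$ points, then consecutive slab points are at horizontal distance $n^{-\Omega(1)}$, which is far larger than the $\Theta(n^{-1/2})$ distance to off-slab points directly above and below; so in $\mst(V)$ the slab points attach to off-slab neighbours, none of the long ``path'' edges you need are present, and the balls produced by slab points are tiny. To force the 1D chain structure into $\mst(V)$ you must additionally require a genuinely two-dimensional empty neighbourhood around the chain, and that emptiness must be built into the probability estimate. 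This is exactly what the paper's Zeno configuration does (one point in each of $k$ tiny disks at geometrically increasing distances, no other point of $V$ in the enclosing disk of area $1/n$), giving occurrence probability $3^{-\Theta(k^2)}$ per point and, via a second-moment argument, a configuration of size $k=\Theta((\log n)^{1/2})$ with high probability; both parts of the theorem are then read off the same configuration.
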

\setcounter{thm}{3}

\subsection{Related Work}

This section surveys previous work on the problem of bounding the
interference of worst-case and random point sets.  A summary of the
results described in this section is given in \figref{related}.  In the
statements of all results in this section, $|V|=n$.

\begin{figure}
\begin{center}
  \begin{tabular}{|l|l|r@{ }l|}\hline
    Ref. & Dimension & \multicolumn{2}{c|}{Statement} \\ \hline
    \cite{vR05} & $d\ge 1$ & there exists $V$ s.t.\ & $I(V)\in \Omega(n^{1/2})$ \\
    \cite{vR05} & $d=1$ & for all $V$, & $I(V)\in O(n^{1/2})$ \\
    \cite{ht08} & $d=2$ & for all $V$, & $I(V)\in O(n^{1/2})$ \\
    \cite{ht08} & $d\ge 3$ & for all $V$, & $I(V)\in O((n\log n)^{1/2})$ \\
    \cite{kkmns10} & $d= 1$ & for $V$ i.u.d. in $[0,1]$, & $I(\mst(V))\in \Theta((\log n)^{1/2})$ w.h.p. \\
    \cite{kdh11} & $d\ge 2$ & for $V$ i.u.d. in $[0,1]^d$, & $I(\mst(V))\in O(\log n)$ w.h.p.  \\
    Here & $d\ge 1$ & for $V$ i.u.d. in $[0,1]^d$, & $I(\mst(V))\in \Theta((\log n)^{1/2})$ w.h.p.  \\ 
    \cite{kkmns10,vR05} & $d = 1$ & for $V$ i.u.d. in $[0,1]$, & $I(V)\in\Omega((\log n)^{1/4})$ w.h.p.  \\
    Here & $d\ge 1$ & for $V$ i.u.d. in $[0,1]^d$,  & $I(V)\in \Omega((\log n)^{1/4})$ w.h.p.  \\ 
    Here & $d\in\{1,2\}$ & for $V$ i.u.d. in $[0,1]^d$, & $I(V)\in O((\log n)^{1/3})$ w.h.p.  \\
    Here & $d\ge 3$ & for $V$ i.u.d. in $[0,1]^d$, & $I(V)\in O((\log n)^{1/3}(\log\log n)^{1/2})$ w.h.p.  \\
  \hline
  \end{tabular}
\end{center}
\caption{Previous and new results on interference in geometric networks.}
\figlabel{related}
\end{figure}

The definition of interference used in this paper was introduced by
von~Rickenbach \etal\ \cite{vR05} who proved upper and
lower bounds on the interference of one dimensional point sets:
\begin{thm}[von Rickenbach \etal\ 2005]\thmlabel{sqrtnlower}
For any $d\ge 1$, there exists $V\subset\R^{d}$ such
that $I(V)\in\Omega(n^{1/2})$.
\end{thm}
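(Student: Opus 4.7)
The plan is to exhibit an explicit adversarial configuration $V \subset \R^d$ and show that every connected graph on $V$ has interference $\Omega(\sqrt{n})$ at some vertex. Because interference depends only on the pairwise Euclidean distances among the points, it suffices to give the construction in $\R^1$ and embed it into $\R^d$ along any coordinate axis.

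\textbf{Construction.} Let $k = \lceil\sqrt{n}\rceil$. Place $k$ ``chain'' points at $p_i = 2^i$ for $i = 1,\ldots,k$, together with $n-k$ ``cluster'' points packed into a tiny interval $[0,\epsilon]$ around the origin, where $\epsilon \ll 1$. The key geometric feature is the exponential spacing: for any two chain points $p_a, p_b$ with $a<b$ the edge $p_ap_b$ has length $2^b-2^a \ge 2^a = \|p_a\|$. Consequently, if $p_a$ has any neighbour strictly to its right in the chain (or any cluster neighbour), then $B_{p_a}$ has radius at least $2^a$ and therefore contains the entire cluster. Call such a chain point \emph{outward}.

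\textbf{Main argument.} I would show that in any connected graph $G$ on $V$, there are $\Omega(k)$ outward chain points, which immediately gives $\Omega(k) = \Omega(\sqrt{n})$ interference at every cluster point. For each $i \in \{1,\ldots,k-1\}$, connectivity of $G$ forces at least one edge to cross the cut $(\{p_1,\ldots,p_i\}\cup C, \{p_{i+1},\ldots,p_k\})$, where $C$ is the cluster. The ``left'' endpoint of such a crossing edge is either a cluster point with a chain neighbour or an outward chain point, and in either case produces a ball covering all of $C$. Thus the $k-1$ cuts are witnessed by balls that each cover the cluster. The final step is a dichotomy: either these $k-1$ witnesses involve $\Omega(k)$ \emph{distinct} balls (yielding cluster interference $\Omega(k)$ directly), or else a few witnesses suffice for all cuts, in which case those few witnesses have very large radii and every such ball contains not only the cluster but also $p_1$, forcing $\Omega(k)$ interference at $p_1$.

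\textbf{Main obstacle.} The subtle point is the dichotomy step, because a single chain point $p_a$ whose farthest right chain neighbour is $p_b$ can by itself witness all cuts $i \in \{a,\ldots,b-1\}$. What rescues us is that such a $p_a$ has radius at least $2^b - 2^a$, which means $B_{p_a}$ contains every chain point with index between $a$ and $b$, and in particular contributes to the interference at every chain point of small index. A careful double-count, charging each of the $k-1$ cuts to a witnessing ball and then charging each high-reach witness against the interference it creates at small-indexed chain points, yields the desired $\Omega(\sqrt{n})$ lower bound. The extension to $\R^d$ for $d \ge 2$ is immediate, since interference is invariant under isometric embeddings of the vertex set.
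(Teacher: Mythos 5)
There is a genuine gap, and it is quantitative rather than cosmetic: the configuration you build (only $k=\lceil\sqrt{n}\rceil$ exponentially spaced chain points plus a cluster) simply does not have interference $\Omega(n^{1/2})$, so no amount of care in the dichotomy step can finish the proof. Concretely, group the chain points into $\sqrt{k}$ consecutive blocks of $\sqrt{k}$ points, connect each block as a star centered at its leftmost point, join consecutive block leaders by single edges, connect the cluster internally as a path (your argument never uses the cluster's internal arrangement, so take it evenly spaced), and attach the cluster to $p_1$ by one edge. A star leaf $p_b$ with leader $p_a$ ($a<b$) has radius $2^b-2^a$, and its ball is the interval $[2^a,\,2^{b+1}-2^a]$, which misses the cluster; only the $\sqrt{k}$ leaders are ``outward.'' Checking all points: a cluster point lies in the $O(\sqrt{k})$ leader balls plus $O(1)$ others; a chain point lies in the balls of at most $\sqrt{k}$ same-block leaves and at most $\sqrt{k}$ leaders; leaders have degree $O(\sqrt{k})$. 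Hence this graph has interference $O(\sqrt{k})=O(n^{1/4})$, which falsifies your claim that every connected graph has $\Omega(k)$ outward chain points, and also shows why your dichotomy cannot be repaired: if $m$ witness balls cover the $k-1$ cuts, the double count can only force interference about $\max\{m,(k-1)/m\}\ge\sqrt{k-1}$ (few witnesses containing $p_1$ means \emph{few} balls at $p_1$, not many; to convert ``one witness covers many cuts'' into interference you need a degree argument, which again yields only $\sqrt{k}$). The square-root loss is intrinsic to this style of argument, and the block graph above is the extremal example.

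The fix is to let the exponential chain carry $\Theta(n)$ points, which is exactly the construction the paper attributes to von Rickenbach et al.: $n$ collinear points whose consecutive gaps decrease geometrically (in the paper's language, a Zeno configuration of size $k=\Theta(n)$). Then the correct dichotomy is the one in the paper's lemma on Zeno configurations: call $x_i$ ``big'' if it is adjacent to some $x_j$ with $j$ on its far side; every big vertex's ball contains the accumulation-end point $x_0$, so either there are at least $\sqrt{k-1}$ big vertices, giving $I(x_0,G)\ge\sqrt{k-1}$, or every point is big or adjacent to a big one and hence some big vertex has degree at least $\sqrt{k-1}-1$, giving interference $\ge\sqrt{k-1}$ at that vertex. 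With $k=\Theta(n)$ this yields $I(V)=\Omega(n^{1/2})$; the embedding into $\R^d$ along a line is fine, as you say. Your auxiliary cluster of $n-k$ points contributes nothing unless it is itself such an exponential chain, in which case the outer $\sqrt{n}$ points are redundant and you are back to the paper's construction.
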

The point set, $V$, in this lower-bound consists of any sequence of
points $x_1,\ldots,x_n$, all on a line, such that $\|x_{i+1}x_i\| \le (1/2)\|x_{i}x_{i-1}\|$,
for all $i\in\{2,\ldots,n-1\}$.  That is, the gaps between consecutive
points decrease exponentially.

This lower bound is matched by an upper-bound:
\begin{thm}[von Rickenbach \etal\ 2005]\thmlabel{twod-upper}
For all $V\subset\R$, $I(V)\in O(n^{1/2})$.
\end{thm}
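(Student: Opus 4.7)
I would prove this by constructing, for any $V\subset\R$ of size $n$, a connected graph $G$ on $V$ with $I(G)\in O(n^{1/2})$. Sort $V$ as $x_1<x_2<\cdots<x_n$ along the line, set $k:=\lceil n^{1/2}\rceil$, and partition the sorted sequence into $m:=\lceil n/k\rceil$ consecutive blocks $B_1,\ldots,B_m$ of at most $k$ points each. In each block $B_j$ designate the leftmost point $r_j$ as its \emph{representative} and add the star edges from $r_j$ to every other point of $B_j$; then add the \emph{backbone} edges $r_jr_{j+1}$ for $1\le j<m$. The resulting graph is clearly connected.

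Fix an arbitrary $p\in V$; I want to upper-bound the number of balls of $G$ that contain $p$. Representative balls are easy: there are only $m=O(n^{1/2})$ of them in total, so at most $O(n^{1/2})$ of them contain $p$. For non-representative balls, note that a non-representative $y\in B_j$ has its only edge to $r_j$, so $B_y=[r_j,\,2y-r_j]$; hence the condition $p\in B_y$ forces $r_j\le p$ and $y\ge (p+r_j)/2$. Call a block $B_j$ \emph{$p$-active} if it contains a non-representative $y$ with $p\in B_y$.

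The structural observation I would establish is that two $p$-active blocks $B_{j_1},B_{j_2}$ with $j_1<j_2$ lying entirely to the same side of $p$ satisfy $|p-x_{j_2k}|<|p-x_{j_1k}|/2$. This follows from the disjointness of blocks (so $r_{j_2}>x_{j_1k}$) together with the $p$-active defining inequality $|p-x_{jk}|\le \mathrm{span}(B_j)$: expanding $2x_{j_2k}-r_{j_2}\ge p$ and substituting $r_{j_2}>x_{j_1k}$ gives $2x_{j_2k}>p+x_{j_1k}$. Consequently, the distances to $p$ of $p$-active blocks on a given side of $p$ decay at least geometrically, so at most $O(\log n)$ blocks on each side of $p$ are $p$-active.

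The main obstacle is converting this geometric decay into an $O(n^{1/2})$ bound on the total non-representative contribution to $I(p,G)$: a naive count of $O(\log n)$ $p$-active blocks per side each contributing up to $k-1$ non-representatives only yields $O(n^{1/2}\log n)$. The refinement I would pursue is to bound the number of contributing non-representatives in a $p$-active block $B_j$ by the number of points of $V$ lying in $[(p+r_j)/2,\,x_{jk}]$, a quantity that shrinks together with $\mathrm{span}(B_j)-|p-x_{jk}|$; together with a telescoping or amortization argument over the geometrically-decaying $p$-distances, this should bound the total non-representative contribution by $O(k)=O(n^{1/2})$. Since $p\in V$ was arbitrary, $I(V)\le I(G)\in O(n^{1/2})$.
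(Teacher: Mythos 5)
Your construction deviates from the cited one in a way that breaks it: you attach every non-representative to the \emph{leftmost point of its own block}, whereas von Rickenbach \etal\ attach each remaining node to its \emph{nearest} hub, and that difference is fatal. Take the exponential chain behind the $\Omega(n^{1/2})$ lower bound (\thmref{sqrtnlower}): points $x_i=1-2^{-i+1}$ on a line, so consecutive gaps halve, and let $p=x_n$. For a non-representative $y=x_i$ in a block whose leftmost point is $r_j=x_a$, the ball $B_y$ has radius $x_i-x_a=2^{-a+1}-2^{-i+1}\ge 2^{-i+1}-2^{-n+1}=x_n-x_i$ for every $i>a$, so the ball of \emph{every} non-representative in \emph{every} block contains $p$. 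Hence your graph has $I(p,G)\ge n-\lceil n/k\rceil$, and no amortization in the analysis can repair this, because the graph itself has linear interference. Relatedly, the step ``the distances of $p$-active blocks decay geometrically, so only $O(\log n)$ blocks are $p$-active'' is unjustified: geometric decay bounds the number of active blocks only by the logarithm of the spread (ratio of largest to smallest interpoint distance), which can be $2^{\Theta(n)}$; in the example above all $\Theta(n^{1/2})$ blocks are $p$-active. You also explicitly left the final telescoping/amortization step as an open obstacle, so even on its own terms the argument is incomplete.

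The intended argument is short once each non-hub connects to its nearest hub. Take as hubs every $\lceil n^{1/2}\rceil$-th point in sorted order, join the hubs by a path, and join each remaining point to the hub closest to it. At any point $p$, hub balls contribute at most about $n^{1/2}$. For non-hub balls, consider centers $y<p$ with $p\in B_y$ and let $y^*$ be the leftmost such center; its radius equals its distance to the nearest hub and is at least $p-y^*$, so the interval $[y^*,p)$ contains no hub and therefore fewer than $\lceil n^{1/2}\rceil$ points of $V$ --- yet it contains every covering non-hub center to the left of $p$. Arguing symmetrically to the right of $p$ gives $I(G)=O(n^{1/2})$. The choice ``nearest hub'' (rather than ``my block's hub'') is exactly what lets a single extremal center certify that all other covering centers on its side lie in a hub-free, hence sparse, interval; that is the ingredient your proposal is missing.
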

The upper bound in \thmref{twod-upper} is obtained by selecting $n^{1/2}$
vertices to act as \emph{hubs}, connecting the hubs into any connected
network and then having each of the remaining nodes connect to its
nearest hub.  This idea was extended to two and higher dimensions
by Halld\'orsson and Tokuyama \cite{ht08}, by using a special type of
$(n^{-1/2})$-net as the set of hubs:
\begin{thm}[Halld\'orsson and Tokuyama 2008]\thmlabel{sqrtn2d}
For all $V\subset\R^d$,
\begin{enumerate}
\item $I(V)\in O(n^{1/2})$ for $d=2$; and
\item $I(V)\in O((n\log n)^{1/2})$, for $d\ge 3$.
\end{enumerate}
\end{thm}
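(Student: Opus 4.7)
My plan is to extend the hub construction that proves \thmref{twod-upper} from one dimension to $d\ge 2$. Given $V\subset\R^d$, I would pick a hub set $H\subseteq V$ of size $m$, where $m=\Theta(n^{1/2})$ if $d=2$ and $m=\Theta((n\log n)^{1/2})$ if $d\ge 3$, connect the hubs into a bounded-degree spanning tree whose edges are short, and connect every non-hub $v\in V\setminus H$ to its nearest hub $h(v)$. The interference at any point $x\in V$ then splits naturally into a contribution from hub-centered balls and a contribution from non-hub-centered balls, and the goal is to show each contribution is at most the claimed bound.

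The heart of the argument is choosing $H$ to be simultaneously a \emph{cover} of $V$ at some scale $r$ (every node lies within distance $r$ of some hub) and a \emph{packing} at the same scale (every ball of radius $O(r)$ meets only a few hubs). Granted such an $H$, each hub ball has radius $O(r)$, since its radius is the maximum of a tree-edge length and a farthest-child distance; so by the packing condition, the number of hub balls covering $x$ is at most the local packing constant. Each non-hub ball, on the other hand, has radius at most $r$, so to cover $x$ its center must lie within distance $r$ of $x$; the cover radius $r$ is tuned so that every radius-$r$ ball contains only $O(n/m)$ input points, which bounds the non-hub contribution. Summing the two contributions then yields $I(V)=O(\sqrt n)$ when $d=2$ and $I(V)=O(\sqrt{n\log n})$ when $d\ge 3$.

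The main obstacle is the simultaneous cover-and-pack construction in higher dimensions. In $d=2$ a standard farthest-first / Delone-type selection yields $H$ with $m=\Theta(\sqrt n)$ and constant packing overhead, and no adversarial point set can defeat the balance. In $d\ge 3$, however, highly clustered point sets can force a gap between the best covering radius and the best packing radius at any fixed scale, and Halld\'orsson and Tokuyama compensate by inflating $m$ (or equivalently $r$) by a $\sqrt{\log n}$ factor. Verifying that this inflation is both necessary for worst-case inputs and sufficient to preserve the packing bound after the greedy hub selection is the delicate part of the proof, and is precisely where the extra logarithm in part~(2) of the statement enters.
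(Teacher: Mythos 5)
There is a genuine gap, and it sits exactly where your construction does the real work. You ask for a single scale $r$ at which $H$ is simultaneously an $r$-cover of $V$ by $m\approx n^{1/2}$ hubs and such that every ball of radius $O(r)$ contains only $O(n/m)$ points of $V$. But this is a worst-case statement, and point sets with exponentially shrinking gaps (the lower-bound sets of \thmref{sqrtnlower}, or the Zeno configurations used later in this paper) defeat every single scale: once $r$ exceeds the smallest interpoint distance, some ball of radius $r$ contains a constant fraction of $V$, and if $r$ is below the smallest distance then covering already forces $m=n$. So the tuning ``every radius-$r$ ball contains $O(n/m)$ input points,'' on which your entire non-hub bound rests, simply does not exist for adversarial $V$; the same clustering also kills the claim that the hub tree can be made of short edges with constant packing overhead (that part is repairable, though: the hub-ball contribution can be charged trivially as at most $|H|=m$, which is all the bound requires, as in the one-dimensional argument for \thmref{twod-upper}).

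The actual Halld\'orsson--Tokuyama argument (this paper only cites it, describing the tool as ``a special type of $(n^{-1/2})$-net'') replaces your metric cover by a \emph{counting-based} net: choose $H\subseteq V$ so that every ball containing at least $\epsilon n$ points of $V$ contains a hub, connect the hubs into an arbitrary connected graph, and connect each non-hub to its nearest hub. If the ball of a non-hub $v$ reaches a point $x$, then $B(v,\|vx\|)$ contains no hub and hence fewer than $\epsilon n$ points of $V$; partitioning the directions around $x$ into $O_d(1)$ cones of angle $60^\circ$ (within one cone, the ball of the farthest offender contains all nearer offenders of that cone) turns this into at most $O_d(\epsilon n)$ non-hub balls covering $x$, while the hub balls contribute at most $|H|$. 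For disks in the plane, $\epsilon$-nets of size $O(1/\epsilon)$ exist, so $\epsilon=n^{-1/2}$ balances the two terms at $O(n^{1/2})$; for balls in $\R^d$ with $d\ge 3$ only nets of size $O((1/\epsilon)\log(1/\epsilon))$ are available, and balancing $|H|$ against $\epsilon n$ with $\epsilon=\sqrt{(\log n)/n}$ gives $O((n\log n)^{1/2})$. Thus the extra logarithm in part~2 comes from $\epsilon$-net sizes for balls in dimension $d\ge 3$, not from a gap between covering and packing radii, and without the counting-based net (or some substitute handling all scales at once) your proof cannot be completed.
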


Several authors have shown that the interference of a point set is
related to the (logarithm of) the ratio between the longest and shortest
distance defined by the point set.  In particular, different versions
of the following theorem have been proven by Halld\'orsson and Tokuyama
\cite{ht08}; Khabbazian, Durocher, and Haghnegahdar
\cite{kdh11}; and Maheshwari, Smid, and Zeh \cite{msz11}:
\begin{thm}[Halld\'orsson and Tokuyama
2008; Khabbazian, Durocher, and Haghnegahdar
2011; Maheshwari, Smid, and Zeh 2011]\thmlabel{log}
  For any constant $d\ge 1$ and for all $V\subset\R^d$,
  $I(V)=O(\log D)$, where $D=\max\{\|xy\|: \{x,y\}\subseteq V\}/\min\{\|xy\|:
  \{x,y\}\subseteq V\}$.
\end{thm}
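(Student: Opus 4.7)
The plan is to prove the theorem by constructing, for any finite $V\subset\R^d$, an explicit connected graph $G$ on $V$ whose interference is $O(\log D)$. I would begin by normalizing so that the minimum pairwise distance is $1$, so the diameter of $V$ is at most $D$; then set $\epsilon_i=2^i$ and $L=\lceil\log_2 D\rceil+1$, so that $\epsilon_L$ exceeds the diameter.

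The graph comes from a hierarchy of nested nets. Greedily build $V=V_0\supseteq V_1\supseteq\cdots\supseteq V_L$ by letting $V_i$ be a maximal $\epsilon_i$-separated subset of $V_{i-1}$. The choice of $\epsilon_L$ forces $V_L$ to be a singleton. For each $x\in V_{i-1}\setminus V_i$, pick a \emph{parent} $p(x)\in V_i$ with $\|xp(x)\|\le\epsilon_i$, which exists by maximality. Take $G$ to be the graph on $V$ with edge set $\{xp(x):x\notin V_L\}$; following parent pointers upward, every vertex reaches the unique point of $V_L$, so $G$ is connected.

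The key structural observation is that each ball $B_j$ has small radius. Define the level of $x_j\in V$ as $\ell(x_j)=\max\{i:x_j\in V_i\}$. Every edge incident to $x_j$ is either the edge to $x_j$'s parent, of length at most $\epsilon_{\ell(x_j)+1}=2\epsilon_{\ell(x_j)}$, or an edge from a child $y$ with $p(y)=x_j\in V_k$, which forces $k\le\ell(x_j)$ and hence length at most $\epsilon_k\le\epsilon_{\ell(x_j)}$. Thus $r_j\le 2\epsilon_{\ell(x_j)}$.

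To bound interference at any $x\in V$, I would split the count by level. Points at level $i$ that contribute to $I(x,G)$ lie in $V_i$ within distance $2\epsilon_i$ of $x$; since $V_i$ is $\epsilon_i$-separated, disjoint balls of radius $\epsilon_i/2$ around each such point all fit in a ball of radius $(5/2)\epsilon_i$ around $x$, so a standard packing argument in constant dimension bounds their number by $5^d=O(1)$. Summing over $L+1=O(\log D)$ levels yields $I(G)=O(\log D)$. I expect no real obstacle: the only delicate step is lining up the hierarchy so that both connectivity and the short-edge property hold simultaneously, after which the packing bound is routine.
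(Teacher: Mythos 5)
Your construction is correct, but it takes a genuinely different route from the paper. The paper does not prove \thmref{log} from scratch: it proves \lemref{log}, which says that the subgraph of $\mst(V)$ consisting of the edges with lengths in a dyadic band $(r,2r]$ has $O(1)$ interference, the key ingredient being an exchange argument --- two MST edges longer than $r$ cannot have both pairs of endpoints within distance $r$ of each other, or one could swap an edge and shorten the tree --- combined with a packing bound; partitioning the MST edges into $\lceil\log_2 D\rceil$ length classes then yields $I(\mst(V))=O(\log D)$, hence \thmref{log}. You instead bypass the MST entirely and build a hierarchical net tree: nested maximal $2^i$-separated nets with parent pointers, so that every edge incident to a vertex of level $i$ has length $O(2^i)$ by construction, and the $2^i$-separation of the level-$i$ net gives the per-level packing constant $5^d$ directly, with $O(\log D)$ levels in total. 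Your argument checks out (connectivity via the parent pointers to the singleton top net, the radius bound $r_j\le 2\epsilon_{\ell(x_j)}$, and the level-by-level packing are all sound), and it is arguably more self-contained since it needs no minimality/exchange property --- the separation that the paper extracts from MST optimality is instead built into the nets. What the paper's route buys, and yours does not, is the stronger conclusion that the \emph{minimum spanning tree itself} has interference $O(\log D)$; that stronger form (via \lemref{log}) is what the paper actually reuses later, e.g.\ in bounding $|B_1|$ in the proof of Part~1 of \thmref{main}, so your construction proves the stated theorem about $I(V)$ but could not substitute for \lemref{log} in the rest of the paper.
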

At least two of the proofs of \thmref{log} proceed by showing that
$I(\mst(V))=O(\log D)$.  A strengthening of this theorem  is that the
numerator in the definition of $D$ can be replaced with the length of
the longest edge in $\mst(V)$ \cite{kdh11,msz11}.

\thmref{log} suggests that point sets with very high interference are
unlikely to occur in practice.  This intuition is born out by the results
of Kranakis \etal\ \cite{kkmns10}, who show that high interference is
unlikely to occur in random point sets in one dimension:
\begin{thm}[Kranakis \etal\ 2010]\thmlabel{sqrtlogn}
  Let $V$ be a set of $n$ points independently and uniformly distributed
  in $[0,1]$.  Then, with high probability, $I(\mst(V))\in \Theta((\log
  n)^{1/2})$.
\end{thm}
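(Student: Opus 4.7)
In one dimension, $\mst(V)$ is the path through the sorted points $x_{(1)}<\cdots<x_{(n)}$, so the ball $B_{(j)}$ at $x_{(j)}$ has radius $r_j=\max(g_{j-1},g_j)$, where $g_\ell=x_{(\ell+1)}-x_{(\ell)}$ is the $\ell$-th spacing (with $g_0=g_n=0$). For $j\ge i+2$, the condition $x_{(j)}-x_{(i)}\le r_j$ reduces to $g_j\ge g_i+g_{i+1}+\cdots+g_{j-1}$, since the alternative $g_{j-1}\ge g_i+\cdots+g_{j-1}$ would force the other spacings to vanish. Decomposing $I(x_{(i)},\mst(V))=1+N_i^++N_i^-$ into coverings from the right and left, it suffices by symmetry to bound $\max_i N_i^+$.

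\textbf{Reduction to i.i.d.\ exponentials.} The uniform spacings are distributed as normalized i.i.d.\ $\mathrm{Exp}(1)$ variables, and since the coverage condition above is scale invariant, we may simply replace each $g_\ell$ by an i.i.d.\ $\mathrm{Exp}(1)$ variable $Y_\ell$. Fix $i$, set $W_m=Y_i+Y_{i+1}+\cdots+Y_{i+m-1}$, and observe that $N_i^+-1=\#\{m\ge 2:W_{m+1}\ge 2W_m\}$. Since $(W_m)_{m\ge 1}$ are the arrival times of a unit-rate Poisson process, a classical Jacobian calculation on the joint density of uniform order statistics shows that the ratios $V_m=W_m/W_{m+1}$ are \emph{mutually independent}, with $\Pr(V_m\le t)=t^m$ on $(0,1)$. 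Consequently, the events $E_m=\{V_m\le 1/2\}$ for $m\ge 2$ are independent Bernoullis with super-geometrically decreasing parameters $p_m=2^{-m}$, and $N_i^+=1+\sum_{m\ge 2}\mathbf{1}_{E_m}$.

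\textbf{Upper bound via factorial moments.} The key step is to exploit the very fast decay of $p_m$. Writing $X=\sum_{m\ge 2}\mathbf{1}_{E_m}$,
\[
\mathbf{E}\binom{X}{k}=\sum_{2\le m_1<\cdots<m_k}2^{-(m_1+\cdots+m_k)}\le C\cdot 2^{-k(k+3)/2},
\]
obtained by substituting $m_\ell=\ell+1+n_\ell$ with $0\le n_1\le\cdots\le n_k$ and recognizing the inner sum as the partition generating function $\prod_{\ell\ge 1}(1-2^{-\ell})^{-1}<\infty$. Markov's inequality then gives $\Pr(N_i^+\ge k+1)\le C\cdot 2^{-k^2/2}$. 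Choosing $k=c\sqrt{\log n}$ with $c$ large drives this below $n^{-2}$, and a union bound over $i$ and over left/right orientation yields $I(\mst(V))=O(\sqrt{\log n})$ with high probability.

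\textbf{Lower bound and main obstacle.} For the matching lower bound, take $i_1,i_2,\ldots$ along an arithmetic progression of step $k+2$, so that the events $\{N_{i_s}^+\ge k+1\}$ depend on disjoint blocks $Y_{i_s},\ldots,Y_{i_s+k+1}$ and are mutually independent, each of probability at least $\prod_{m=2}^{k+1}2^{-m}=2^{-(k^2+3k)/2}$. For $k=c\sqrt{\log n}$ with $c$ small, the roughly $n/k$ independent trials guarantee at least one success w.h.p., giving $I(\mst(V))=\Omega(\sqrt{\log n})$. The main obstacle is pinpointing the correct exponent $k^2/2$ rather than $k\log k$: a naive bound $\mathbf{E}\binom{X}{k}\le(\sum p_m)^k/k!$ yields only $O(\log n/\log\log n)$, so recovering $\sqrt{\log n}$ relies crucially on both the partition-based moment estimate above and the sharp independence of the ratios $V_m$.
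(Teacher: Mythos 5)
Your proposal is correct, and it proves the one-dimensional statement by a genuinely different route than this paper takes. Note first that the paper never proves \thmref{sqrtlogn} directly: it is quoted from Kranakis \etal, and the paper's own contribution is the $d$-dimensional generalization, whose upper bound (Part~1 of \thmref{main}) classifies the balls covering a point by area and combines a Chernoff bound for the small balls, the packing lemma (\lemref{log}) giving $O(1)$ interference per dyadic length scale for the medium balls, and an empty-ball estimate for the large balls, and whose lower bound (Part~1 of \thmref{lower-bound}) plants Zeno configurations and controls their count with the second moment method. You instead exploit structure that only exists in one dimension: the MST is the sorted path, the coverage condition at $x_{(i)}$ becomes ``spacing $g_j$ exceeds the sum of the intervening spacings,'' uniform spacings are scale-equivalent to i.i.d.\ exponentials, and the ratios $W_m/W_{m+1}$ of Poisson arrival times are independent with law $t^m$. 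This turns the interference from one side into a sum of independent Bernoullis with parameters $2^{-m}$, whose factorial moments you bound sharply via the partition generating function to get the $2^{-\Theta(k^2)}$ tail (your remark that the naive $(\sum p_m)^k/k!$ bound loses the $\sqrt{\log n}$ rate is exactly right), and the lower bound comes from genuinely independent disjoint blocks, so no second moment method is needed. What each approach buys: yours gives a clean, essentially exact analysis with sharp constants in $d=1$; the paper's geometric argument is cruder but is dimension-free, which is the whole point of \thmref{main} and \thmref{lower-bound}. One small wording fix: the events $\{N_{i_s}^+\ge k+1\}$ are not literally measurable with respect to the blocks $Y_{i_s},\dots,Y_{i_s+k+1}$ (the count $N_{i_s}^+$ involves all later spacings); you should phrase the lower bound in terms of the block-measurable sub-events $\bigcap_{m=2}^{k+1}\{V_m\le 1/2\}$, which imply $N_{i_s}^+\ge k+1$ and are independent across blocks, exactly as your probability estimate already assumes.
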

Note that, in this one-dimensional case, the minimum spanning tree,
$\mst(V)$, is simply a path that connects the points of $V$ in order,
from left to right.  Taken together, Part 1 of Theorems~\ref{thm:main}
and \ref{thm:lower-bound} generalize \thmref{sqrtlogn} to arbitrary
constant dimensions $d\ge 1$.

In higher dimensions, Khabbazian, Durocher, and Haghnegahdar \cite{kdh11}
use their version of \thmref{log} to show that minimum spanning trees
of random point sets have at most logarithmic interference.
\begin{thm}[Khabbazian, Durocher, and Haghnegahdar 2011]\thmlabel{durocher}
  Let $V$ be a set of $n$ points independently and uniformly distributed
  in $[0,1]^d$.  Then, with high probability, $I(\mst(V))\in O(\log n)$.
\end{thm}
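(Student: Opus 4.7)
The plan is to obtain \thmref{durocher} as an immediate corollary of \thmref{log}. As the paper notes, at least two proofs of \thmref{log} actually establish the stronger bound $I(\mst(V)) = O(\log D)$, so it suffices to show that $D$, the ratio of the largest to smallest interpoint distance, is at most polynomial in $n$ with high probability.

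The numerator of $D$ is trivial: any two points in $[0,1]^d$ are within distance $\sqrt{d}=O(1)$, so it is really only the minimum pairwise distance $s_{\min}=\min\{\|xy\| : \{x,y\} \subseteq V\}$ that needs to be controlled. I would do this with a one-line first-moment calculation. The probability that two fixed i.u.d.\ points in $[0,1]^d$ lie within distance $r$ is $O(r^d)$ (the volume of a ball of radius $r$), so by a union bound over the $\binom{n}{2}$ pairs the expected number of close pairs is $O(n^2 r^d)$. Choosing $r=n^{-3/d}$ makes this $O(n^{-1})=o(1)$, and Markov's inequality then yields $s_{\min}\ge n^{-3/d}$ with high probability.

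Combining the two estimates gives $D=O(n^{3/d})$, hence $\log D=(3/d)\log n+O(1)=O(\log n)$ with high probability. Feeding this into the MST-strengthened form of \thmref{log} produces $I(\mst(V))=O(\log n)$ with high probability, as desired. The ``main obstacle'' is essentially nothing once \thmref{log} is available; because $D$ enters only through a logarithm, the very crude bound $s_{\min}\ge n^{-3/d}$ (far from the tight scaling $s_{\min}=\Theta(n^{-2/d})$) is already more than enough. The real content of the theorem therefore lies in \thmref{log}, not in the probabilistic estimate.
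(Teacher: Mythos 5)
Your proposal is correct and matches the route the paper itself indicates for this cited result: the paper states that Khabbazian, Durocher, and Haghnegahdar obtain it from their (MST-strengthened) version of \thmref{log}, i.e.\ $I(\mst(V))=O(\log D)$, combined with the fact that $D$ is polynomial in $n$ with high probability for i.u.d.\ points. Your first-moment bound $s_{\min}\ge n^{-3/d}$ w.h.p.\ is a perfectly adequate (if deliberately crude) way to supply that second ingredient.
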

Part 1 of \thmref{main} improves the upper bound in \thmref{durocher} to
$O((\log n)^{1/2})$ and Part 1 of \thmref{lower-bound} gives a matching
lower bound.

The second parts of Theorems~\ref{thm:main} and \ref{thm:lower-bound}
show that minimum spanning trees do not minimize interference, even for
random point sets.  For random point sets, one can construct networks with
interference $O((\log n)^{1/3})$ and the best networks have interference
in $\Omega((\log n)^{1/4})$.

The remainder of this paper is devoted to proving Theorems~\ref{thm:main}
and \ref{thm:lower-bound}.  For ease of exposition, we only present these
proofs for the case $d=2$ though they generalize, in a straightforward
way, to arbitrary (constant) dimensions.

\section{Proof of the Upper Bounds (\thmref{main})}

In this section, we prove \thmref{main}.  However, before we do this,
we state a slightly modified version of \thmref{log} that is needed in
our proof.

\begin{lem}\lemlabel{log}
  Let $V\subset\R^d$, let $r>0$, and let $\mst^r(V)$ denote the subgraph
  of $\mst(V)$ containing only the edges whose length is in $(r,2r]$.
  Then $I(\mst^r(V))\in O(1)$.
\end{lem}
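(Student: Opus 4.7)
The plan is to show that, for any vertex $x\in V$, at most a constant number of balls in $B(\mst^r(V))$ contain $x$.  I rely on two standard properties of the Euclidean MST: (i)~the \emph{empty-lens} property---for any MST edge $\{u,v\}$ of length $\ell$, the open lens $B(u,\ell)\cap B(v,\ell)$ contains no points of $V$; and (ii)~the \emph{angular-separation} property---any two MST edges incident to a common vertex meet at an angle of at least $\pi/3$ (in the plane), so the MST-degree of every vertex is bounded by a constant (at most $6$ in the plane, and in general by the $d$-dimensional kissing number).

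For the setup, observe that every edge of $\mst^r(V)$ has length in $(r,2r]$, so any non-empty ball $B_j$ has radius $r_j\in(r,2r]$.  Hence $x\in B_j$ forces $x_j\in B(x,2r)$, and $x_j$ has some incident MST edge $e_j=\{x_j,y_j\}$ of length $\ell_j=r_j$.  If $y_j=x$, then $\{x,x_j\}$ is itself an MST edge and property~(ii) applied at $x$ bounds the number of such $j$ by a constant.  Otherwise $y_j\ne x$, and applying property~(i) to $e_j$ (using $\|xx_j\|\le \ell_j$) yields $\|xy_j\|\ge \ell_j>r$, so the far endpoint $y_j$ lies in the annulus $A=B(x,4r)\setminus B(x,r)$.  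A second use of property~(ii), this time at $y_j$, shows that the map $j\mapsto y_j$ is $O(1)$-to-one, reducing the task to bounding the number of distinct far endpoints in~$A$.

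To bound the distinct $y_j$'s, I would partition $A$ into $O(1)$ cells of diameter at most $r/4$ and show each cell contains at most $O(1)$ candidates.  Suppose $y_j$ and $y_k$ are two such candidates in a common cell.  Then $\|y_jy_k\|<r<\ell_j$, so $y_k\in B(y_j,\ell_j)$; property~(i) applied to $e_j$ then forces $\|x_jy_k\|\ge \ell_j>r$, and symmetrically $\|x_ky_j\|>r$.  Combined with $x_j,x_k\in B(x,2r)$ and $\|y_jy_k\|\le r/4$, these distance constraints, together with one more application of~(ii), bound the per-cell count by a constant.  I expect this per-cell packing step to be the main technical obstacle of the proof: everything else is bookkeeping, but here the fine MST geometry and the ambient dimension really enter, with the final constant depending on the $d$-dimensional kissing number.
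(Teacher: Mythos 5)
Your reduction is fine as far as it goes (centers lie in $B(x,2r)$, each center is charged to the far endpoint of its defining edge, bounded MST degree makes the charging $O(1)$-to-one, far endpoints land in an annulus that you cut into $O(1)$ cells), but the step you defer---bounding the number of distinct far endpoints per cell---is exactly where the content of the lemma lives, and the tools you reserve for it (empty lens plus angular separation) do not suffice. The constraints you extract for two candidates $y_j,y_k$ in one cell are purely pairwise distance constraints: $\|y_jy_k\|\le r/4$, $\|x_jy_k\|\ge\ell_j>r$, $\|x_ky_j\|\ge\ell_k>r$, $x_j,x_k\in B(x,2r)$, $\|xx_j\|\le\ell_j$. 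These are simultaneously satisfiable by arbitrarily many pairs: place the $y$'s on a tiny arc around a point $c$ with $\|xc\|\approx 1.5r$, and put each $x_j$ on the ray from $c$ through $y_j$ at distance $1.2r$ beyond $y_j$, using only ray directions pointing roughly toward $x$ so that $\|xx_j\|\le 1.2r$. Then each $y_j$ is the nearest $y$ to its own $x_j$, so every open lens $B(x_j,\ell_j)\cap B(y_j,\ell_j)$ avoids all other points, and no two of the edges share a vertex, so property (ii) never applies. Such a family of course cannot consist of MST edges---but nothing in your listed constraints rules it out; what rules it out is the two-edge exchange (cycle) argument, which needs the \emph{near} endpoints of the two edges to be close as well, and you have only confined them to $B(x,2r)$, where they can be up to $4r$ apart.

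That exchange argument is the paper's key step, and it is what you are missing. The paper packs the centers lying in $B(x,2r)$ into $O(1)$ balls of radius $r/2$, then packs the corresponding far endpoints into $O(1)$ balls of radius $r/2$; if any group still contains two edges, these are two MST edges of length greater than $r$ whose near endpoints are within $r$ of each other and whose far endpoints are within $r$ of each other, and deleting one of them and reconnecting across the resulting cut by one of the two short cross distances yields a strictly cheaper spanning tree---a contradiction that caps the count at $2\cdot 30^d$. To repair your proof, subdivide $B(x,2r)$ as well into $O(1)$ cells of diameter at most $r$, pigeonhole each edge by the pair (cell of $x_j$, cell of $y_j$), and replace the lens bookkeeping by this exchange argument; the lens property, being a statement about a single edge and a single point, is simply too weak. (Two smaller points: your lens application assumes $\|xx_j\|<\ell_j$ strictly and $y_k\notin\{x_j,y_j\}$, which need a word; and the paper actually proves, and later uses, the stronger statement that $I(x,\mst^r(V))=O(1)$ for \emph{every} point $x\in\R^d$, not just vertices, which your argument should and easily can accommodate.)
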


\begin{proof}
(This proof is similar to the proof of Lemma~3 in Ref.~\cite{msz11}.)
Let $x$ be any point in $\R^d$ and let $B$ the set of all
balls in $B(\mst^r(V))$ that contain $x$ so that, by definition
$I(x,\mst^r(V))=|B|$.  

Refer to \figref{packing} for what follows.  All the centers of balls in
$B$ are contained in a ball of radius $2r$ centered at $x$.  Therefore,
a simple packing argument implies that there exists a ball, $b$, of
radius $r/2$ that contains at least $|B|/5^d$ centers of balls in $B$.
($5^d$ is the volume of a ball of radius $5r/2$ divided by the volume
of a ball of radius $r/2$.)  The center of each of these ball is the
endpoint of an edge of length at most $2r$.  The other endpoints of these
edges are all contained in a ball of radius $5r/2$ centered around $b$.
The same packing argument shows that we can find a ball of radius $r/2$
that contains at least $|B|/(5\cdot 6)^d$ of these other endpoints.

\begin{figure}
  \begin{center}
    \includegraphics{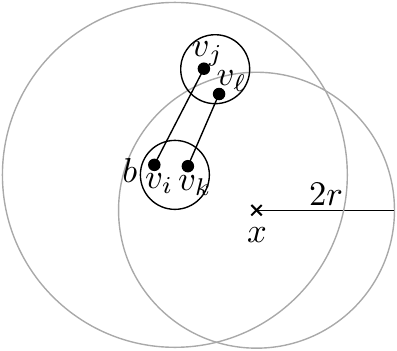}
  \end{center}
  \caption{The proof of \lemref{log}.}
  \figlabel{packing}
\end{figure}

We claim that this implies that $|B|/30^d< 2$ (so $|B|< 2\cdot
30^d$). Otherwise, $\mst(V)$ contains two edges, $x_ix_j$ and $x_kx_\ell$,
each of length greater than $r$ and such that $\|x_ix_k\| \le r$ and
$\|x_jx_\ell\|\le r$.  But this contradicts the minimality of $\mst(V)$,
since one could replace $x_ix_j$ with one of $x_ix_k$ or $x_jx_\ell$
and obtain a spanning tree of smaller total edge length.  We conclude
that $|S_i|< 2\cdot 30^d$, and this completes the proof.
\end{proof}

Note that \lemref{log} implies \thmref{log}, since it implies that we
can partition the edges of $\mst(V)$ into $\lceil\log_2 D\rceil$ classes,
based on length, and each class will contain only a constant number of edges.

We are ready to prove Parts~2 and 3 of \thmref{main}.  The sketch of the
proof is as follows:  We partition $[0,1]^d$ into equal cubes of volume
$1/nt$, for some parameter $t$ to be chosen later.  Using Chernoff's
bounds, we show that each cube contains $O((\log n)^{2/3})$ points so
that the points within each cube can be connected, using the results of
Halld\'orsson and Tokuyama, with maximum interference $O((\log n)^{1/3})$.
Next, the cubes are connected to other cubes by selecting one point in
each cube and connecting these selected points with a minimum spanning
tree.  \lemref{log} is then used to show that this minimum spanning
tree has maximum interference $O((\log n)^{1/3})$.  Without further ado,
we present:

\begin{proof}[Proof of \thmref{main}, Parts 2 and 3]
Partition $[0,1]^2$ into square \emph{cells} of area $1/nt$ for some
value $t$ to be specified later.  Let $N_i$ denote the number of points
that are contained in the $i$th cell.  Then $N_i$ is binomial
with mean $\mu=1/t$.  Recall Chernoff's Bounds \cite{c52} on the tails
of binomial random variables:
\[
  \Pr\{N_i \ge (1+\delta)\mu\} 
    \le \left(\frac{e^\delta}{(1+\delta)^{1+\delta}}\right)^\mu \enspace .
\]
In our setting, we have, 
\begin{align*}
  \Pr\{N_i \ge k\} 
    & = \Pr\{N_i \ge kt\mu\} \\
    & \le \left(\frac{e^{kt}}{(kt)^{kt}}\right)^{1/t} \\
    & = \frac{e^{k}}{(kt)^{k}} \\
    & \le \frac{1}{t^{k}} & \text{for $k\ge e$} \\
    & \le \frac{1}{n^{c+2}} \enspace , 
\end{align*}
for $t=2^{(\log n)^{1/3}}$ and $k=(c+2)(\log n)^{2/3}$.

Note that the number of cells is no more than $nt\le
n^2$, for sufficiently large $n$.  Therefore, by the union bound, the
probability that there exists any cell containing more than $k$ points
is at most $n^{-c}$.

Within each non-empty cell, we apply \thmref{sqrtn2d} to
connect the vertices in the $i$th cell into a connected graph $G_i$
with $I(G_i)=O(\sqrt{N_i})$.\footnote{This is where the discrepancy between Parts~2 and 3 of the theorem occurs.  For $d\ge 3$, \thmref{sqrtn2d} only guarantees $I(G_i)=O(\sqrt{N_i\log N_i})$.}  In fact, a somewhat stronger result holds,
namely that $\max\{I(x,G_i) : x\in\R^2\}=O(\sqrt{N_i})$.  Notice that
each edge in $G_i$ has length at most $\sqrt{2/nt}$.  Stated another
way, in $\bigcup_i G_i$, any point, $x$, receives interference only
from cells within distance $\sqrt{2/nt}$ of the cell containing $x$.
There are only 25 such cells, so
\[
  \max\left\{I\left(x,\bigcup_iG_i\right) : x\in\R^2\right\}=O(\sqrt{k}) 
    = O((\log n)^{1/3})
\]
with high probability.

Thus far, the points within each cell are connected to each other and
the maximum interference, over all points in $\R^2$, is $O(\sqrt{k})$.
To connect the cells to each other, we select one point from each
non-empty cell and connect these using a minimum spanning tree, $T$.
What remains is to show that the additional interference caused by the
addition of the edges in $T$ does not exceed $O((\log n)^{1/3})$.

Suppose that $I(x,T)=r$, for some point $x\in\R^2$.  There are at most
9 vertices in $T$ whose distance to $x$ is less than $1/\sqrt{nt}$.
Therefore, by \lemref{log}, $T$ must contain an edge of length at least
$c2^r/\sqrt{nt}$, for some constant $c>1$.  

A well-known property of minimum spanning trees is that, for any edge
$x_ix_j$ in $T$, the open ball with diameter $x_ix_j$ does not contain
any vertices of $T$.  In our setting, this means that there is an open
ball, $B$, of radius $c2^r/2\sqrt{nt}$ such that every cell contained in
$B$ contains no point of $V$.  Inside of $B$ is another empty ball $B'$
of radius $c2^r/(2\sqrt{nt})-\sqrt{2/nt}$ whose center is also the center
of some cell.

At least one quarter of the area of $B'$ is contained in $[0,1]^2$,
so the number of cells completely contained in $B'$ is at least $\pi c^22^{2r}/16 -
O(2^{r}/\sqrt{nt})$.  By decreasing $c$ slightly, and only considering
$r$ larger than a sufficiently large constant, $r_0$, we can simplify
this number of cells to $\pi c^{2r}/16$.

For a fixed ball $B'$, the probability that the $c\pi  2^{2r}/16$ cells
defined by $B'$ are empty of points in $V$ is at most
\begin{align*}
 p 
  & \le (1-c\pi 2^{2r}/{16nt})^{n} \\
  & \le \exp(-c\pi 2^{2r}/16t) \\
 & \le 1/n^{2+c'} \enspace ,
\end{align*}
for $r\ge\log(16/c\pi)+\log t + \log(2+c')+\log\ln n$.  By the union bound, the
probability that there exists any such $B'$ is at most
$pnt\le 1/n^{c'}$.  Since we can choose $r\in O(\log t+\log\log n) = O((\log
n)^{1/3})$, this completes the proof.
\end{proof}

The proof of Part~1 of \thmref{main} is just a matter of reusing the ideas
from the previous proof of Parts~2 and 3.

\begin{proof}[Proof of \thmref{main}, Part 1]
Let $x$ be any point in $\R^2$.  We partition the balls in $B(\mst(V))$
that contain $x$ into three sets:
\begin{enumerate}
  \item the set $B_0$ of balls having area at most $1/nt$;
  \item the set $B_1$ of balls having area in the range $[1/nt,(c\log
  n)/n]$; and
  \item the set $B_2$ of balls having area greater than $(c\log n)/n$.
\end{enumerate}
In this proof, the parameter $t=2^{(\log n)^{1/2}}$.

The set $B_0$ consists of points contained in a ball of area $1/nt$
centered at $x$.  Exactly the same argument used in the first part of
the previous proof shows that, with high probability, every such ball contains
$O((\log n)^{1/2})$ points, so
\[
     |B_0| \in O((\log n)^{1/2}) \enspace .
\]

The set $B_1$ consists of balls whose radii are in the range $[\sqrt{1/\pi
nt},\sqrt{(c\log n)/\pi n}]$.  \lemref{log} shows that the number of
these balls is
\begin{align*}
    |B_1| & \in O\left(\log\left(\frac{\sqrt{(c\log n)/\pi n}}{\sqrt{1/\pi nt}}\right)\right) \\
    & = O(\log\log n + \log t) \\
    & = O((\log n)^{1/2}) \enspace .
\end{align*}

Finally, any edge in the set $B_2$ implies the existence of an empty ball,
with center in $[0,1]^2$, having area $c\log n/n$.  The second part of the
previous proof shows that the probability that such a ball exists is
$O(n^{-c})$.  Therefore, with high probability,
\[
   |B_2| = 0 \enspace . \qedhere
\]
\end{proof}

\section{Proof of The Lower Bounds (\thmref{lower-bound})}

In this section, we prove the lower bounds in \thmref{lower-bound}.
We define a \emph{Zeno configuration} as follows (see \figref{zeno}):
A Zeno configuration of size $k$, centered at a point, $x$, is defined
by a set of $k+1$ balls.  The construction starts with disjoint balls
$D_0,\ldots,D_{k-1}$, each having radius $u$.  The ball $D_0$ is centered
at $x$.  The center of $D_i$, $i\in\{1,\ldots,k-1\}$ is at $x+(u3^i, 0)$.
A final large ball, $D$, of radius $r=u3^k$ is centered at $x$ and
contains all other balls.  A Zeno configuration occurs at location $x$
in a point set $V$ when $D$ contains exactly $k$ points of $V$ and these
occur with exactly one point in each ball $D_i$.

\begin{figure}
  \begin{center}
    \includegraphics[width=\textwidth]{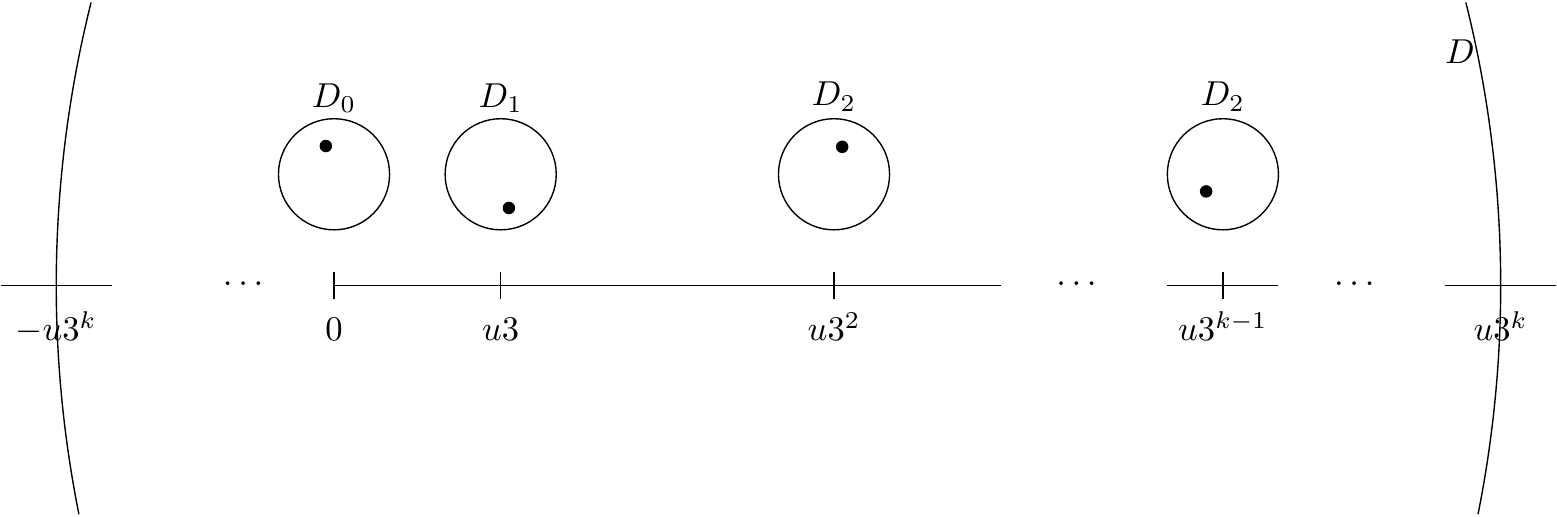}
  \end{center}
  \caption{A Zeno configuration of size $k$.}
  \figlabel{zeno}
\end{figure}

The following lemma shows that a Zeno configuration in $V$ causes high
interference in $\mst(V)$.

\begin{lem}\lemlabel{zeno-mst}
If $V$ contains a Zeno configuration of size $k$, $I(\mst(V))\ge k-1$.
\end{lem}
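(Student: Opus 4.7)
The plan is to show that $p_0$, the unique point of $V$ in $D_0$, has interference at least $k-1$ in $\mst(V)$; letting $p_i$ denote the unique point of $V$ in $D_i$ and writing $B_i\in B(\mst(V))$ for the ball of radius $r_i$ centered at $p_i$, we shall exhibit $k-1$ distinct indices $i\in\{0,1,\ldots,k-1\}$ for which $p_0\in B_i$ (equivalently $r_i\ge \|p_ip_0\|$). The setup begins with the elementary distance estimates forced by the construction. Write $c_i$ for the center of $D_i$, so $c_0=x$ and $\|c_i-x\|=u\cdot 3^i$ for $i\ge 1$; since each $D_i$ has radius $u$, the triangle inequality gives that $\|p_ip_j\|$ lies within $\pm 2u$ of $\|c_ic_j\|$, the inside-$D$ distances thus grow geometrically in $\max\{i,j\}$, and every $q\in V\setminus D$ sits at distance at least $u\cdot 3^k-\|c_i-x\|-u$ from $p_i$, which dwarfs any single inside-$D$ distance for $i\le k-2$.

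The heart of the argument is to pin down specific edges of $\mst(V)$ using the MST cut property on the cuts $C_i=(\{p_0,\ldots,p_i\},\,V\setminus\{p_0,\ldots,p_i\})$ for $i=0$ and for each $i\in\{2,\ldots,k-2\}$. A case analysis of the candidate crossing edges---the main competitors to $(p_i,p_{i+1})$ being $(p_{i-1},p_{i+1})$ and $(p_i,p_{i+2})$, with more distant inside-$D$ pairs and edges from some $p_j$ to $V\setminus D$ being even longer---uses the geometric-series bounds to show that the strict minimum-weight crossing edge of $C_0$ is $(p_0,p_1)$ and, for $i\ge 2$, the strict minimum of $C_i$ is $(p_i,p_{i+1})$. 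The cut property then forces $\mst(V)$ to contain $(p_0,p_1)$ together with every $(p_i,p_{i+1})$ for $i\in\{2,\ldots,k-2\}$.

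From these edges the required containments fall out. The edge $(p_0,p_1)$ yields $r_1\ge\|p_0p_1\|$, so $p_0\in B_1$; and for each $i\in\{2,\ldots,k-2\}$, the edge $(p_i,p_{i+1})$ yields $r_i\ge 2u\cdot 3^i-2u$, which exceeds the upper bound $u\cdot 3^i+2u$ on $\|p_ip_0\|$ whenever $3^i\ge 4$, i.e., for every $i\ge 2$. Together with the trivial $p_0\in B_0$, this places $p_0$ inside the $k-1$ distinct balls $B_0,B_1,\ldots,B_{k-2}$, giving $I(\mst(V))\ge k-1$. The main obstacle is the cut-minimality step: the index $i=1$ is borderline since $(p_0,p_2)$ can be shorter than $(p_1,p_2)$, which is why $p_0\in B_1$ must be extracted from $C_0$ rather than $C_1$; for $i\ge 2$ the growth factor of $3$ between successive $D_i$-centers leaves enough slack to strictly separate $(p_i,p_{i+1})$ from every competitor, and outside-$D$ edges are ruled out by the large factor $3^k$.
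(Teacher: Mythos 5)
Your proof is correct, and it follows the same overall skeleton as the paper's: force the chain of edges $p_ip_{i+1}$ into $\mst(V)$, then observe that the resulting balls centered at $p_0,p_1,\ldots,p_{k-2}$ all contain $p_0$, giving interference $k-1$ at $p_0$. The difference is in the mechanism used to force the edges. The paper argues that the nearest neighbour of $x_i$ is $x_{i-1}$ for $i\ge 1$ and invokes the fact that the nearest-neighbour graph is contained in $\mst(V)$; you instead apply the cut property to the nested prefix cuts $C_i=(\{p_0,\ldots,p_i\},V\setminus\{p_0,\ldots,p_i\})$ for $i=0$ and $i\in\{2,\ldots,k-2\}$, verifying that $(p_i,p_{i+1})$ is the strict minimum-weight crossing edge in each case. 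Your route is slightly longer but more robust at the small indices: the paper's nearest-neighbour claim is actually delicate at $i=2$ (with worst-case placements inside the balls, $\|x_2x_1\|$ can be as large as $8u$ while $\|x_2x_0\|$ can be as small as $7u$, so $x_0$ may be the nearest neighbour of $x_2$), although the paper's conclusion survives because $B_1$ still contains $x_0$ via the edge $x_0x_1$. You explicitly flag and sidestep exactly this borderline behaviour by extracting $p_0\in B_1$ from the cut $C_0$ (whose minimum crossing edge $(p_0,p_1)$ is unambiguous) rather than from any claim about $(p_1,p_2)$, and your arithmetic for $i\ge 2$ ($r_i\ge 2u3^i-2u>u3^i+2u$) matches the paper's. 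The ball count $B_0,B_1,B_2,\ldots,B_{k-2}$ indeed yields the required $k-1$.
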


\begin{proof}
Let $x_i$, $i\in\{0,\ldots,k-1\}$, denote the point of $V$ contained in
$D_i$.  Note that, for $i\in\{1,\ldots,k-1\}$ the closest point to $x_i$
in $V$ is $x_{i-1}$.  Since $\mst(V)$ contains the nearest-neighbour
graph, this implies that $\mst(V)$ contains the edges $x_ix_{i+1}$ for
all $i\in\{0,\ldots,k-2\}$.  See \figref{zeno-mst} for what follows.
We claim that, for all $i\in\{0,\ldots,k-2\}$, the ball $B_i$ centered
at $x_i$ that contains $x_{i+1}$ also contains $x_0$.  This is clearly
true for $i=0$ and $i=1$.  Next, note that
\[
  \|x_ix_0\| \le u(3^i+2) \enspace .
\]
On the other hand, for $i\ge 2$,
\[
  \|x_ix_{i+1}\| \ge u(3^{i+1}-3^i-2) = 2u3^i-2u \ge u(3^i + 7) > 
\|x_ix_0\| \enspace .
\]
Therefore, $I(x_0,\mst(V)) \ge k-1$.
\end{proof}

\begin{figure}
  \begin{center}
    \includegraphics{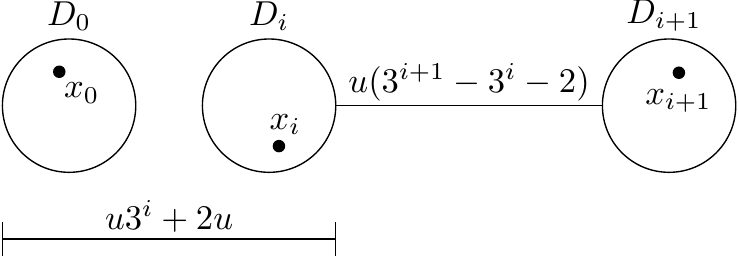}
  \end{center}
  \caption{The ball centered at $x_i$ that contains $x_{i+1}$ also contains $x_0$.}
  \figlabel{zeno-mst}
\end{figure}

The next lemma shows that a Zeno configuration causes high interference on
any connected graph on vertex set $V$.

\begin{lem}
If $V$ contains a Zeno configuration of size $k$, then $I(V)\ge\sqrt{k-1}$.
\end{lem}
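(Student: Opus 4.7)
The plan is to bound from below the interference at $x_0$ in any connected graph $G$ on $V$. Set $I=I(G)$ and $V^*=\{v\in V : x_0\in B_v\}$, so that $|V^*|=I(x_0,G)\le I$; the goal is to force $|V^*|=\Omega(\sqrt{k})$, which matches $\sqrt{k-1}$ asymptotically. Call a Zeno point $x_l$ \emph{reaching} if $x_l\in V^*$, i.e., $r_{x_l}\ge\|x_lx_0\|$. The key structural observation is: for $l\ge 2$, a non-reaching $x_l$ has all of its $G$-neighbours among Zeno points of strictly smaller index. Indeed $r_{x_l}<\|x_lx_0\|\le u(3^l+2)$, whereas (i) any non-Zeno vertex lies outside $D$ and so is at distance at least $u(3^k-3^l-2)$ from $x_l$, and (ii) any Zeno $x_{l'}$ with $l'>l$ is at distance at least $u(2\cdot 3^l-2)$ from $x_l$; both exceed $u(3^l+2)$ for $l\ge 2$. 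The same inequalities also show that whenever $x_p$ has $x_l$ as a neighbour with $p\ge 2$ and $l>p$, the resulting edge has length at least $2u\cdot 3^p-2u$, which is at least $\|x_px_0\|$, so $x_p$ must itself be reaching.

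Now let $L=\{l\in\{2,\ldots,k-1\}:x_l\text{ is non-reaching}\}$. Since $x_0\in V^*$ always, at most $I-1$ of the points $x_2,\ldots,x_{k-1}$ can be reaching, giving $|L|\ge k-I-1$. For each $l\in L$ pick any neighbour $x_{p(l)}$ with $p(l)<l$ and set $P=\{p(l):l\in L\}$. By the last sentence of the previous paragraph, $P$ is contained in the reaching Zeno points together with the two boundary indices $\{0,1\}$, so $|P|\le I+O(1)$. The universal degree bound $\deg(x_p)+1\le I(x_p,G)\le I$ implies that each parent has at most $I-1$ children in the parent tree, whence $|L|\le|P|(I-1)\le I^2+O(I)$.

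Combining $k-I-1\le|L|\le I^2+O(I)$ yields $k\le I^2+O(I)$, and therefore $I=\Omega(\sqrt{k})$, which gives the stated $\sqrt{k-1}$ after absorbing the $O(1)$ boundary slack (small values of $k$ can be checked directly). The main obstacle is keeping these boundary effects small: in particular, when $p(l)\in\{0,1\}$ the parent need not itself be reaching (this can happen at $p=1$ if $x_1$'s only edge is to $x_2$ and $\|x_1x_2\|<\|x_1x_0\|$), and one must argue that such exceptions contribute only $O(1)$ to $|P|$ so that the final inequality is sharp enough to yield exactly $\sqrt{k-1}$ rather than a slightly weaker constant.
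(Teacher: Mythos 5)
Your argument is, at its core, the paper's own: your ``reaching'' vertices play the role of the paper's ``big ones'', and both proofs rest on the same two facts, namely that a Zeno point of index $\ge 2$ with a neighbour of larger index (or outside $D$) has its ball covering $x_0$, and that $\deg(v)+1\le I(v,G)\le I(G)$ for every vertex $v$. Your distance estimates are correct, and the double count is sound as far as it goes: it yields $k\le I^2+O(I)$, hence $I(G)\ge\sqrt{k}-O(1)$, which is all that the application of the lemma (the $\Omega((\log n)^{1/4})$ bound) actually requires.

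As a proof of the literal statement $I(V)\ge\sqrt{k-1}$, however, the last step has a genuine gap. Even with the tightest version of your bookkeeping ($|L|\ge k-I-1$, $|P|\le I+1$ since $x_0$ is always reaching, and at most $I-1$ children per parent) the chain gives $k\le I^2+I$, i.e.\ $I\ge(\sqrt{4k+1}-1)/2$, which is \emph{strictly smaller} than $\sqrt{k-1}$ for every $k\ge 3$; the deficit is an additive constant (tending to $1/2$) that persists for all large $k$, so it cannot be ``absorbed'' by checking small values of $k$ directly --- that remedy addresses the wrong end of the range. The loss does not come mainly from the exceptional parents of index $0$ or $1$, but from subtracting the reaching vertices from the left-hand side ($|L|\ge k-I-1$) instead of letting each reaching vertex cover itself. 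The paper's count avoids this: each of the $k-1$ points $x_0,\ldots,x_{k-2}$ is either big or adjacent to a big one, so $k-1\le\sum_{\text{big }v}(1+\deg v)\le b\cdot I$, while $b\le I(x_0,G)\le I$, giving $I^2\ge k-1$ in one stroke (the paper phrases this as a dichotomy: either there are $\ge\sqrt{k-1}$ big ones, or some big one has degree $\ge\sqrt{k-1}-1$). Your concern about index $1$ is in fact legitimate --- in a general connected graph a big $x_1$ need not have its ball cover $x_0$, a point the paper passes over, since the $i=1$ case of the MST lemma used the edge $x_0x_1$ --- but that affects the count only at the $\pm 1$ level; it is the change in how covered vertices are counted, not a case analysis on small $k$, that recovers the stated constant.
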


\begin{proof}
Let $G$ be any connected graph on $V$.  Using the same notation as
in the proof of \lemref{zeno-mst}, call a vertex, $x_i$, a \emph{big
one} if $x_i$ is adjacent to any vertex $x_j$, with $j>i$, or $x_i$ is
adjacent to any vertex $x$ not in $D$.  The proof of \lemref{zeno-mst}
shows that every big one contributes to the interference at $x_0$.
Therefore, if the Zeno configuration contains $\sqrt{k-1}$ or more big
ones, then $I(x_0,G)\ge\sqrt{k-1}$ and there is nothing left to prove.
Otherwise, note that each of $x_0,\ldots,x_{k-2}$ is either a big one
or adjacent to a big one. Therefore, there must be a big one, $x_i$,
with degree at least $\sqrt{k-1}-1$, so $I(x_i,G)\ge\sqrt{k-1}$.
\end{proof}

To prove \thmref{lower-bound}, all that remains is to show a Zeno
configuration of size $\Omega((\log n)^{1/2})$ occurs in $V$ with high
probability.

\begin{proof}[Proof of \thmref{lower-bound}]
Choose the parameter $u$ in the Zeno configuration so that $\pi r^2=1/n$,
i.e., $u=1/(\sqrt{\pi n}3^k)$.  Then the area of the small balls is
$\pi u^2=1/(n3^{2k})$.
We analyze the probability that a Zeno configuration of length $k=c(\log
n)^{1/2}$ centered at $x_i$ occurs in a set, $V$, of $n$ i.u.d. points
$\{x_1,\ldots,x_n\}$ in
$[0,1]^2$.  Let $\mathcal{Z}_i$ denote the event ``$V$
contains a Zeno configuration centered at $x_i$.''  Then we have
\begin{align*}
 \Pr\{\mathcal{Z}_i\mid x_i\in[r,1-r]^2\} 
  & = \left(\frac{(n-1)!}{(n-k)!}\right) 
      \left(\frac{1}{n3^{2k}}\right)^{k-1}   
      \left(1-\frac{1}{n}\right)^{n-k}   \\ 
  & \ge
      \left((n-k)^{k-1}\right) 
      \left(\frac{1}{n3^{2k}}\right)^{k-1}  
      \left(1-\frac{1}{n}\right)^{n-k}   \\ 
  & \ge 
      (1-k/n)^{k-1} %
      \left(\frac{1}{3^{2k(k-1)}}\right)
      \left(1-\frac{1}{n}\right)^{n-k} \\
  & \ge
      \left(1-o(1)\right) 
      \left(\frac{1}{3^{2k(k-1)}}\right)
      \left(1/e-o(1)\right)   \\ 
  & \ge (1/e-o(1)) \left(\frac{1}{3^{2k(k-1)}}\right) \\
  & = \Omega(1/n^{\alpha})
\end{align*}
for $k=((\alpha/2)(\log_3 n))^{1/2}$, where $\alpha$ is a free parameter
in the range $[0,1]$.  Since $\Pr\{x_i\in[r,1-r]^2\} > 1-4r$, we now
uncondition
\[
 \Pr\{\mathcal{Z}_i\} \ge  
    (1-4r)\cdot\Pr\{\mathcal{Z}_i\mid x_i\in[r,1-r]^2\}
    = \Omega(1/n^{\alpha})
\]
Let $Y_i$ be the indicator variable defined as
\[
   Y_i = \begin{cases} 1 & \text{if $\mathcal{Z}_i$} \\
                       0 & \text{otherwise} 
         \end{cases}
\]
and let $N=\sum_{i=1}^n Y_i$ count the number of Zeno configurations.  We
have just shown that 
\[
   \E[N] = n\E[Y_i] = n\Pr\{\mathcal{Z}_i\}=\Omega(n^{1-\alpha}) \enspace .
\]
Unfortunately, this is not quite enough to prove that $N>0$ with high
probability.  Instead, we finish the proof using the second moment method
(c.f., Alon and Spencer \cite[Chapter~4]{as08}).  For this, we need only 
show that, for any $\{i,j\}\subset\{1,\ldots,n\}$,
\[
   \limsup_{n\rightarrow\infty}\frac{\E[Y_iY_j]}{\E[Y_i]\E[Y_j]} = 1
   \enspace .  \footnote{In particular, this shows that the value
  $\Delta$ in Ref.~\cite[Corollary~4.3.4]{as08} satisfies the condition $\Delta\in o(\E[N]^2)$.}
\]
To do this, we repeat the above argument, but for a pair of Zeno
configurations, one at $x_i$ and one at $x_j$.  Let $A$ denote the event
``$\|x_ix_j\| < 2r$ or $\{x_i,x_j\}\not\subset[r,1-r]^2$''.  Let $A^c$
denote the complement of $A$.  Conditioning on $A^c$ we obtain
\begin{align*}
\frac{\E[Y_iY_j]}{\E[Y_i]\E[Y_j]} 
& \le \frac{\E[Y_iY_j]}{1/(e3^{2k(k-1)})^2} \\
& = (e3^{2k(k-1)})^2(\Pr\{A^c\}\E[Y_iY_j|A^c] + \Pr\{A\}\E[Y_iY_j|A]) \\
& \le (e3^{2k(k-1)})^2\E[Y_iY_j|A^c]  + (4r+\pi r^2) \\
& \le (e3^{2k(k-1)})^2 
    \left(\frac{(n-2)!}{(n-2k)!}\right)
    \left(\frac{1}{n3^{2k}}\right)^{2k-2}
    \left(1-\frac{2}{n}\right)^{n-2k} + (4r+\pi r^2) \\
& \le (e3^{2k(k-1)})^2
    \left(n^{2k-2}\right)
    \left(\frac{1}{n3^{2k}}\right)^{2k-2}
    \left(1-\frac{2}{n}\right)^{n-2k} + (4r+\pi r^2) \\
& \le (e3^{2k(k-1)})^2
    \left(\frac{1}{3^{2k}}\right)^{2k-2}
    \left(1-\frac{2}{n}\right)^{n-2k} + (4r+\pi r^2) \\
& \le (e3^{2k(k-1)})^2
    \left(\frac{1}{3^{2k}}\right)^{2k-2}
    \left(1/e^2-o(1)\right) + (4r+\pi r^2) \\
& = e^23^{4k(k-1)}
    \left(\frac{1}{3^{4k(k-1)}}\right)
    \left(1/e^2-o(1)\right) + (4r+\pi r^2) \\
& = 1-o(1) + O(1/\sqrt{n}) \rightarrow 1 \enspace ,
\end{align*}
as $n\rightarrow\infty$.  This completes the proof.
\end{proof}

\section{Discussion}

\paragraph{Summary.}
This paper gives new bounds on the maximum interference for graphs defined
by points randomly distributed $[0,1]^d$. Minimum spanning trees have
interference $\Theta((\log n)^{1/2})$, but better graphs exist; a strategy
based on bucketing yields a graph with interference $O((\log n)^{1/3})$.
No graph on such a point set has interference $o((\log n)^{1/4})$.

\paragraph{Open Problem.}
An obvious open problem is that of closing the gap between the upper bound
of $O((\log n)^{1/3})$ and the lower bound of $\Omega((\log n)^{1/4}$.
One strategy to achieve this would be to prove the following conjecture,
which has nothing to do with probability theory:
\begin{conj}
  For any $V\subset\R^d$, $I(V) = O(\sqrt{I(\mst(V))})$.
\end{conj}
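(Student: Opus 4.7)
The plan is to adapt the two-level construction used in the proof of Parts~2 and~3 of \thmref{main}. There, $[0,1]^d$ was partitioned into grid cells of a carefully chosen side length so that each cell contained $O((\log n)^{2/3})$ points with high probability; \thmref{sqrtn2d} was applied within each cell (intra-cell interference $O((\log n)^{1/3})$), and a single representative per cell was linked using an MST to which \lemref{log} applied (inter-cell interference $O((\log n)^{1/3})$). In the deterministic setting of the conjecture, set $I^* := I(\mst(V))$ and aim for a clustering of $V$ into groups $C_1,\ldots,C_s$ with representatives $y_1,\ldots,y_s\in V$ such that (a)~$|C_i|\le k = O(I^*)$ and each $C_i$ has diameter at most $r$; (b)~$L/r \le 2^{c\sqrt{I^*}}$, where $L$ is the diameter of $V$; and (c)~$I(\mst(\{y_1,\ldots,y_s\}))\in O(\sqrt{I^*})$. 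Granted such a clustering, \thmref{sqrtn2d} gives intra-cluster interference $O(\sqrt{k}) = O(\sqrt{I^*})$, \lemref{log} gives $O(\log(L/r)) = O(\sqrt{I^*})$ for the representative MST, and the usual packing argument ensures that any point receives intra-cluster interference from only $O(1)$ nearby clusters, yielding $I(V)\in O(\sqrt{I^*})$ overall.

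The technical heart of the proof would be establishing the existence of such a clustering. The most natural first attempt is uniform: partition the bounding box of $V$ into grid cells of side length $r = L\cdot 2^{-c\sqrt{I^*}}$ and pick an arbitrary point per non-empty cell as its representative. In the spirit of \lemref{zeno-mst}, the intuition is that the only way to pack $\omega(I^*)$ points into a single ball of radius $r$ while $V$ also spans a scale of $L$ is to create a chain of nested dense balls across the $\Theta(\sqrt{I^*})$ dyadic scales between $r$ and $L$; a pigeonhole-and-packing argument in the style of \lemref{log} should then extract a Zeno-like substructure of size $\omega(I^*)$, contradicting the definition of $I^*$.

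The main obstacle is that a uniform clustering is too rigid. Consider $n$ evenly spaced collinear points: $I(\mst(V))\in O(1)$ yet no scale $r\in[L\cdot 2^{-c},L]$ yields $O(1)$ points per ball, even though the MST itself already realises $I(V)\in O(1)$. This indicates that the clustering must be adaptive---single-point clusters must be permitted wherever the local MST is already efficient---and that the correct structural statement is the existence of a hierarchical decomposition of $V$ at $O(\sqrt{I^*})$ levels, rather than a single uniform scale. Proving such an existence result, most likely by recursively splitting $\mst(V)$ along its longest edges and charging the interference contributed at every level to disjoint units of $I^*$ via \lemref{log}, appears to be the crux of the conjecture.
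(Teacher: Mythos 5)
There is nothing in the paper to compare your argument against: the statement you were asked to prove is stated in the paper as an open \emph{conjecture} (it is offered as a possible route to closing the gap between the $O((\log n)^{1/3})$ upper bound and the $\Omega((\log n)^{1/4})$ lower bound), and the paper gives no proof of it. Your proposal does not close it either, and you essentially concede this in your last sentence. Concretely, the argument is organized around the existence of a clustering satisfying your conditions (a)--(c), but that existence statement is exactly the unproven content. Condition (c) is the most problematic: asking for a representative set whose minimum spanning tree already has interference $O(\sqrt{I(\mst(V))})$ is very close to assuming the conjecture for a subset of $V$, so the reduction is circular unless you can produce the representatives by a mechanism that controls their MST interference directly. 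Condition (b) has a similar difficulty: \lemref{log} bounds the interference of the representative MST by the number of dyadic edge-length classes, i.e.\ by $O(\log(\ell_{\max}/\ell_{\min}))$ for the representative set, and nothing ties that ratio to $I(\mst(V))$ --- your own collinear example (evenly spaced points, $I(\mst(V))=O(1)$, $\log(L/r)=\Theta(\log n)$ at every uniform scale) shows the link fails for the uniform grid, and the adaptive decomposition that is supposed to repair it is only named, not constructed. Finally, the ``usual packing argument'' for $O(1)$ interfering clusters near any point relies on clusters being spatially round and comparable in size, which an adaptive, multi-scale clustering need not provide; in the randomized proof of \thmref{main} this step works because all cells are congruent squares, a property you give up the moment you allow single-point clusters at heterogeneous scales.

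In short, you have correctly transplanted the two-level architecture of the paper's upper-bound proof (intra-cell \thmref{sqrtn2d}, inter-cell \lemref{log}) and correctly diagnosed why its deterministic analogue does not follow from a single uniform scale, but the hierarchical existence lemma you would need --- a decomposition of an arbitrary $V$ into $O(\sqrt{I(\mst(V))})$ levels whose per-level contributions can be charged to disjoint units of $I(\mst(V))$ --- is precisely the open problem, and no step of your sketch supplies it. Note also that the weaker form of this conjecture, $I(V)=O(\sqrt{\log D})$, attributed in the paper to Halld\'orsson and Tokuyama, is likewise open, which is further evidence that the missing lemma is not a routine packing or pigeonhole argument.
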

A weaker version of this conjecture is due to Halld\'orsson and Tokuyama
\cite{ht08}, who conjecture that $I(V)=O(\sqrt{\log D})$ where $D$ is the
ratio of the lengths of the longest and the shortest edges of $\mst(V)$.

\paragraph{Unit Disk Graphs.}
Several of the references consider interference in the \emph{unit
disk graph model}, in which the graph $G$ is constrained to use
edges of maximum length $r(n)$.  It is straightforward to verify
that all of the proofs in this paper continue to hold in this model,
when $r(n)\in\Omega(\sqrt{(\log n)/n})$.  This is not an unreasonable
condition; for i.u.d.\ points in $[0,1]^d$, it is known that
$r(n)\in\Omega(\sqrt{(\log n)/n})$ is a necessary condition to be able to
form a connected graph $G$ \cite{p97}.

\paragraph{Locally Computable Graphs.}
Khabbazian, Durocher, and Haghnegahdar \cite{kdh11} give a local
algorithm, called \textsc{LocalRadiusReduction}, that is run at the nodes
of a communication graph, $G=(V,E)$, and that reduces the number of edges
of $G$.  The resulting graph $G'$ comes from a class of graphs that
they denote as $\mathcal{T}(V)$.  The class $\mathcal{T}(V)$ includes
the minimum spanning tree of $V$ and the graphs in this class share
many of the same properties as the minimum spanning tree.  In particular,
the following result can be obtained by using the proof of \thmref{main} Part 1 and properties of the family $\mathcal{T}(V)$ \cite[Theorem~3]{kdh11}.

\setcounter{thm}{2}
\begin{thm}\thmlabel{tp}
  Let $V$ be a set of $n$ independently and uniformly distributed
  points in $[0,1]^d$ and let $G$ be any graph in $\mathcal{T}(V)$.
  With high probability, $I(G)=O((\log n)^{1/2}+\log (\ell\sqrt{n}))$,
  where $\ell$ is the length of the longest edge in $G$.
\end{thm}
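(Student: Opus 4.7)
The plan is to mimic the proof of \thmref{main} Part~1 almost verbatim, with $G$ playing the role of $\mst(V)$ and with the upper end of the ``medium-length'' regime shifted to the longest edge length $\ell$ of $G$. Fix $x\in\R^2$ and let $t=2^{(\log n)^{1/2}}$. Partition the balls in $B(G)$ that contain $x$ into two sets (instead of three): the set $B_0$ of balls of area at most $1/nt$, and the set $B_1$ of balls of area in $[1/nt,\pi\ell^2]$. Since every edge of $G$ has length at most $\ell$, nothing falls outside these two classes, so the old $B_2$ regime is automatically empty and no empty-ball argument is needed here.

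For $|B_0|$, I would reuse the Chernoff argument of the previous proof unchanged: every ball of area $1/nt$ centered at a point of $[0,1]^2$ contains $O((\log n)^{1/2})$ points of $V$ with high probability, and therefore $|B_0|\in O((\log n)^{1/2})$. This bound depends only on $V$, not on $G$, so no property of the family $\mathcal{T}(V)$ enters here.

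For $|B_1|$, I would invoke the analogue of \lemref{log} (equivalently, the strengthened version of \thmref{log}) for the family $\mathcal{T}(V)$, which is exactly the content of \cite[Theorem~3]{kdh11}: for any $G\in\mathcal{T}(V)$ and any $r>0$, the subgraph of $G$ consisting of edges with length in $(r,2r]$ has $O(1)$ interference. Partitioning $B_1$ by doubling length classes and applying this gives
\[
  |B_1|\in O\left(\log\frac{\ell}{\sqrt{1/(\pi nt)}}\right)
    = O(\log(\ell\sqrt{n}) + \log t)
    = O(\log(\ell\sqrt{n}) + (\log n)^{1/2}).
\]
Adding the bounds on $|B_0|$ and $|B_1|$ yields $I(x,G)\in O((\log n)^{1/2}+\log(\ell\sqrt{n}))$, which, taken over all $x\in V$, gives the theorem.

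The only non-routine step is the one marked as a citation: the packing argument of \lemref{log} used the MST-exchange property (any candidate swap decreases total weight), so transferring it to $\mathcal{T}(V)$ requires the analogous local-optimality property proved by Khabbazian, Durocher, and Haghnegahdar. Once that result is taken off the shelf, everything else is a direct copy of the $|B_0|$ and $|B_1|$ estimates from the proof of \thmref{main} Part~1, with the high-probability event being identical to the one used there (no additional union bound is needed since $\ell$ is a deterministic function of $G$).
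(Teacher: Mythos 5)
Your proposal is correct and is exactly the argument the paper intends: it only states that \thmref{tp} ``can be obtained by using the proof of \thmref{main} Part 1 and properties of the family $\mathcal{T}(V)$'' from \cite[Theorem~3]{kdh11}, and your write-up fleshes this out faithfully --- the $B_0$ Chernoff bound unchanged, the $B_1$ dyadic-class count extended up to radius $\ell$ via the $\mathcal{T}(V)$ analogue of \lemref{log}, and the observation that the $B_2$ regime is vacuous because no edge exceeds $\ell$. Nothing further is needed.
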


In particular, \thmref{tp} implies that running the
\textsc{LocalRadiusReduction} algorithm at the nodes of a unit disk graph
with unit $r(n)\in O(2^{\sqrt{\log n}}/\sqrt{n})$ yields a connected
graph with maximum interference $O((\log n)^{1/2})$.

\section*{Acknowledgement}

The research in this paper was started at the workshop on \emph{Models
of Sparse Graphs and Network Algorithms (12w5004)}, hosted at the Banff
International Research Station (BIRS), February 5--10, 2012.  The authors
are grateful to the other workshop organizers, Nicolas~Broutin and
G\'abor~Lugosi, the other participants, and the staff at BIRS, for
providing a stimulating research environment.

\bibliographystyle{plain}
\bibliography{iud}

\end{document}